\documentclass[11pt]{article}

\usepackage[T1]{fontenc}
\usepackage[margin=1in]{geometry}
\usepackage{amsmath,amsthm,amssymb}
\usepackage{mathptmx}
\usepackage{mathtools}
\usepackage{booktabs}
\usepackage{graphicx}
\usepackage{float}
\usepackage{enumitem}
\usepackage{microtype}
\usepackage{natbib}
\bibpunct[, ]{(}{)}{,}{a}{}{,}
\def\BIBand{and}

\usepackage[dvipsnames]{xcolor}

\usepackage{xurl}
\usepackage{hyperref}
\usepackage[nameinlink,noabbrev]{cleveref}
\hypersetup{
  colorlinks=true,
  linkcolor=MidnightBlue,
  citecolor=MidnightBlue,
  urlcolor=MidnightBlue,
  pdfborder={0 0 0},
  pdftitle={Selective Verification in Centralized Admissions: Which Priority Claims Must Be Checked?}
}
\setlist{itemsep=2pt,topsep=4pt}

\theoremstyle{plain}
\newtheorem{theorem}{Theorem}[section]
\newtheorem{proposition}[theorem]{Proposition}
\newtheorem{lemma}[theorem]{Lemma}
\newtheorem{corollary}[theorem]{Corollary}
\theoremstyle{definition}
\newtheorem{definition}[theorem]{Definition}
\newtheorem{example}[theorem]{Example}

\newcommand{\I}{\mathcal I}
\newcommand{\Sset}{\mathcal S}
\newcommand{\emptyseto}{\varnothing}
\newcommand{\Ch}{\operatorname{Ch}}
\newcommand{\DA}{\operatorname{DA}}
\newcommand{\SVDA}{\operatorname{SVDA}}
\newcommand{\LSV}{\operatorname{LSV}}
\newcommand{\1}{\mathbf 1}

\title{Selective Verification in Centralized Admissions:\\
Which Priority Claims Must Be Checked?}
\author{%
Varun Bansal$^{1}$, Sinan Ertemel$^{2}$, Rajnish Kumar$^{3}$, Anh Pham$^{3}$\\[0.4em]
\small $^{1}$ Economics and Planning Unit, Indian Statistical Institute, Delhi, India\\
\small $^{2}$ Department of Economics, Istanbul Technical University, Maçka, Istanbul, Türkiye\\
\small $^{3}$ Economics Department, Queen’s Business School, Queen’s University Belfast, United Kingdom\\[0.2em]
\small \texttt{varun23r@isid.ac.in}; \texttt{ertemels@itu.edu.tr}\\
\small \texttt{rajnish.kumar@qub.ac.uk}; \texttt{a.pham@qub.ac.uk}\\
\small Corresponding author: Sinan Ertemel (\texttt{ertemels@itu.edu.tr})}
\date{}

\begin{document}
\maketitle

\begin{abstract}
Centralized admissions platforms grant priority based on self-reported facts such as residence, sibling status, or family income, and a false claim can cost another applicant her seat. Verifying every claim before matching is expensive, yet checking only admitted applicants afterward is unsafe: a false claim by an applicant who ends up unmatched may already have triggered a chain of rejections. We show how a platform can check far fewer claims without changing any outcome. A claim matters only if it can move its applicant across a school's admission cutoff, and we give an exact, easily computed test for when this happens. Checking only such claims, and only when they could change a school's tentative choice, reproduces every application, rejection, and assignment of deferred acceptance under full verification. When claims are truthful, checking the highest claimed priority first minimizes checks among such policies at each school decision, and the number of checks never exceeds the school's capacity. Truthful reporting of preferences and eligibility is a weakly dominant strategy, so incentives deliver the conditions under which this audit order is optimal. In simulated markets, the policy cuts verification by 76 percent relative to checking every claim upfront and by 36 percent relative to checking each claim when it first reaches a school. The design principle is to verify decisions, not files.
\end{abstract}

\noindent\textbf{Keywords:} centralized matching; deferred acceptance; priority verification; information acquisition; market design.

\section{Introduction}

In 2026, one English council investigated 15 applications to its selective schools on suspicion of false addresses, rejecting eight before offers were made and withdrawing three offers afterward \citep{BBC2026Southend}. Such cases are not isolated. English councils nearly tripled their investigations of fraudulent school applications within two years and withdrew nearly 700 offers \citep{SchoolsWeek2015}. In Ahmedabad, India, 140 admissions under the Right to Education quota for low-income families were cancelled after allocation because parental incomes exceeded the eligibility limit \citep{BusinessToday2024RTE}, yet there was ``no clear policy on how to fill the resulting vacant seats'' \citep{GujaratSamachar2024RTE}. In Vienna, temporary address registrations are more than four times as common just before school enrollment as at other times, and parents who know of such moves are far less likely to regard school allocation as fair \citep{Zednik2025,ORF2026Wien}.

These episodes share a common structure. Centralized admissions platforms do more than rank applicants and allocate seats; they also manage the information that determines priority. Residence, sibling status, current enrollment, feeder-school attendance, income, and statutory eligibility may all move an applicant into a higher priority group. These facts are typically reported by applicants and are costly to verify: the platform may need documents, administrative records, or follow-up with families and schools. Verifying every claim before assignment wastes effort on claims that cannot affect the result. Checking only after the match, as in the cases above, comes too late. By the time a false claim is caught, it may already have displaced another applicant, who then applied elsewhere and displaced someone else. Withdrawing the offer does not restore that chain, and the resulting errors undermine trust in the allocation itself.

This creates an operational design problem inside the matching mechanism. Which priority claims can affect the next admission decision? At what point must the platform verify them? Can it preserve the full-verification allocation without turning verification into a blanket screening exercise? We answer these questions for student-proposing deferred acceptance, the leading mechanism in centralized school assignment.

Our central insight is that the relevance of a claim is determined at the admission cutoff. A claim needs checking exactly when its applicant can straddle that cutoff: for some resolution of the other unresolved claims, the applicant is admitted if her own claim is valid and rejected if it is invalid. We call such a claim \emph{pivotal}. Under ordered priority groups, we provide necessary and sufficient conditions for pivotality. This characterization is the main theoretical result.

We make four further contributions that turn this characterization into a mechanism and an operating policy. First, we develop a local selective-verification procedure. A school checks one pivotal claim, incorporates the result, and recomputes pivotality. It stops when no unresolved claim can change the chosen set. At that point, every remaining resolution leads to the same decision, so the procedure exactly recovers the school's full-verification choice.

Second, we embed the local procedure in deferred acceptance, obtaining \emph{selective-verification deferred acceptance} (SVDA). At every round, each school verifies only the claims needed to make its current tentative choice. The resulting mechanism reproduces the entire full-verification path: the same applications, tentative holdings, rejections, and final allocation occur in the same order. Path equivalence is useful because rejection timing determines subsequent applications. It also explains why a seemingly economical alternative, running the mechanism first and checking only provisionally assigned applicants, can be wrong. An ultimately unmatched applicant may already have displaced another applicant and altered the rejection chain.

Third, we study the order of checks. No fixed order minimizes verification across all possible patterns of valid and invalid claims. When the outstanding claims are valid, however, checking higher claimed tiers first and using baseline priority to break ties minimizes the number of checks at the current school decision among procedures that verify only currently pivotal claims. Validity is not known before verification and need not hold off the truthful-reporting path. The audit order also produces a transparent capacity bound: a school needs no more checks at a given decision than it has seats. The result provides a practical audit queue while keeping clear that the bound is local, not a bound on all checks over the entire matching process.

Fourth, we connect the information policy to incentives. Under exact verification, a correct claim receives its true priority group and an incorrect claim receives no special priority. Truthful eligibility reporting can therefore only improve an applicant's priorities. Combined with the usual strategy-proofness of student-proposing deferred acceptance, this makes truthful reporting of both preferences and eligibility a weakly dominant strategy. Truthful reports, in turn, supply the validity condition under which the recommended audit order is verification-minimal. Incentives and information acquisition therefore reinforce one another.

We also use computational experiments to measure the verification burden under the paper's exact mechanism. In a baseline environment with three school-specific eligibility tiers, reported-priority SVDA performs about one check per student. It reduces verification by 76.3 percent relative to checking every submitted positive claim and by 36.4 percent relative to checking every positive claim when it first reaches a school. Additional experiments vary capacity, demand concentration, eligibility prevalence, market size, and the incidence of invalid reports. These calculations quantify administrative performance rather than allocation or welfare effects, because SVDA and fully verified DA have the same path and outcome by construction.

The design message is to verify decisions rather than files. The platform should retain verified facts, recompute relevance when a school's candidate set changes, and interrupt the matching process only before an unresolved claim can affect a tentative choice. This approach preserves the full-verification benchmark while directing administrative effort to the cutoff, where information can change allocation.

The remainder of the introduction describes the institutional setting and positions the paper in the literature. Section~\ref{sec:model} introduces the verification environment and the full-verification benchmark. Section~\ref{sec:local} develops the cutoff characterization, local procedure, and audit-order result. Section~\ref{sec:svda} embeds selective verification in deferred acceptance. Section~\ref{sec:incentives} establishes the incentive results. Section~\ref{sec:simulation} reports the computational evaluation. Section~\ref{sec:discussion} develops managerial implications, extensions, and concluding observations.

\subsection{Operational setting: priority claims in centralized admissions}\label{sec:institutions}

Ordered tiers describe systems in which schools consider priority groups in sequence. New York City provides a clear example. The published 2026 kindergarten priorities for zoned schools first include students who live in the zone and have a sibling at the school, then other zoned students, followed by groups based on sibling status, district residence, and current pre-K enrollment \citep{NYC2026Kindergarten}. Pre-K and 3-K admissions also use ordered groups based on enrollment, sibling links, residence, and program type \citep{NYC2026PreK,NYC2026ThreeK}. Families report the underlying facts, and the system maps those facts into a priority group.

England provides another example. Admission authorities must publish both their oversubscription criteria and the order in which they apply. Looked-after and previously looked-after children usually receive the highest priority. Lower groups often depend on siblings, residence or distance, feeder-school status, and other eligibility rules \citep{UKAdmissionsCode2021,UKAdmissionsCriteria}. Applicants may have to prove a new address \citep{UKHowToApply}, and an offer obtained through false or misleading information may be withdrawn \citep{UKAdmissionsCode2021,UKAdmissionsOverseas2026}. Verification is therefore part of the admissions process.

Boston is a useful matching example. Its assignment systems have used sibling and neighborhood priorities, and the Boston redesign is central to the school-choice literature \citep{AbdulkadirogluPathakRothSonmez2005Boston,AbdulkadirogluSonmez2003}. Under the former walk-zone policy, neighborhood priority applied to only part of a school's capacity. Priorities therefore differed across seats rather than following one school-wide tier order \citep{DurKominersPathakSonmez2018,KominersSonmez2016}. We use Boston to motivate factual priorities and rejection chains.

These examples also delimit the baseline model. A failed sibling claim may still leave an applicant with residence priority, and some programs reserve particular seats for designated groups. We initially study a school-wide ordered priority system in which a claim either places its applicant in the reported group or leaves the applicant at baseline priority. This introduces a punishment element into the mechanism: an applicant whose eligibility report is found to be false is treated at baseline priority (tier $0$). This structure isolates the verification problem and yields a transparent cutoff test. Section~\ref{sec:discussion} discusses fallback priorities and seat-specific rules.

Verification has value only when information can change a decision. Many claims cannot affect the current cutoff: an applicant may be admitted under every possible resolution or rejected under every possible resolution. Checking such a claim adds administrative work without improving the decision. A useful policy must distinguish these claims from those that can change the chosen set.

Timing is equally important. If a false claim displaces another applicant, the rejected applicant applies elsewhere and may displace someone else. Evidence from Denmark documents effects on applicants who did not manipulate their residence, while evidence from Vienna finds strategic address registration and harm to sincere families \citep{BjerreNielsenEtAl2023,Zednik2025}. As the Ahmedabad cancellations illustrate, detection after allocation is often too late to undo this harm. A platform must therefore check a consequential claim before it changes a tentative school choice, not merely before publishing the final allocation.

\subsection{Related literature}\label{sec:literature}

The paper builds first on stable matching and school choice. Student-proposing deferred acceptance originates with \citet{GaleShapley1962}, which establishes stability of its outcome; \citet{DubinsFreedman1981} and \citet{Roth1982} establish its incentive properties. \citet{AbdulkadirogluSonmez2003} brings this approach to school choice, followed by major reforms in New York and Boston \citep{AbdulkadirogluPathakRoth2005NYC,AbdulkadirogluPathakRothSonmez2005Boston}. We hold the allocation rule and its full-verification outcome fixed. Our question is how much priority information the platform must acquire, and when, to run that rule exactly.

The paper also relates to priority and category design. Priorities affect efficiency and fairness \citep{Ergin2002,KojimaManea2010}. Work on controlled choice, affirmative action, reserves, and distributional constraints studies how group rights should enter school choice \citep{HafalirYenmezYildirim2013,EhlersHafalirYenmezYildirim2014,EcheniqueYenmez2015,KamadaKojima2015}. Other papers study priorities that differ across seats, overlapping reservations, several privileges, or multiple priority orders \citep{KominersSonmez2016,DurKominersPathakSonmez2018,AygunBo2021,SonmezYenmez2022,KitaharaOkumura2024}. These models usually treat the facts behind priority as known. We instead study what must be learned when those facts are unresolved.

A growing matching literature studies costly or incomplete information, including preference acquisition, voluntary disclosure, partial screening, price discovery, and the information needed to operate school-choice mechanisms \citep{HakimovKublerPan2023,BjerreNielsenChrisander2022,ImmorlicaLeshnoLoLucier2020,FengLiu2024}. Our unknown object is different. The platform already has reported preferences; it must learn whether the facts that generate priority are valid. Moreover, information is acquired adaptively within the matching process because the relevance of a claim depends on the applicants currently competing at a school.

The paper also connects to mechanism design with costly, imperfect, or probabilistic verification \citep{BenPorathDekelLipman2014,MylovanovZapechelnyuk2017,PereyraSilva2023,BallKattwinkel2025}. We use a simpler technology: any report can be checked perfectly. The question is whether its truth matters for the next school choice. This choice may start a rejection chain, so verification takes place within a matching process rather than a one-time allocation.

Two related lines sharpen the distinction. Evidence on residence manipulation shows that school eligibility can be both strategic and verifiable \citep{BjerreNielsenEtAl2023,Zednik2025}. We take the possibility of misreporting seriously and design verification into deferred acceptance. By contrast, work on auditability asks whether participants can verify that a clearinghouse followed its announced rule \citep{GrigoryanMoller2024,HakimovRaghavan2025,Moller2026,HuZhang2026}; \citet{GrigoryanMoller2024theory} develop the general theory. \citet{HuZhang2026}, for instance, require that each applicant can confirm her assignment and seat type from her own score and published category cutoffs; the eligibility facts behind those categories are taken as known. Our direction of verification is reversed: the clearinghouse verifies applicant information. The two questions are complementary components of a trustworthy platform.

Relative to these literatures, our contribution is an exact and operational characterization of decision-relevant priority information. The cutoff test identifies which unresolved fact can affect a school's current choice; the adaptive policy acquires only information that is consequential at that moment; and embedding the policy in deferred acceptance preserves the full-verification path. The audit-order and incentive results then explain how the platform should sequence checks when truthful claims are expected.

\section{Model and Verification Environment}\label{sec:model}

There is a finite set of students $\I=\{1,2,\ldots,n\}$ and a finite set of schools $\Sset=\{a,b,c,\ldots\}$.
Each school $s\in\Sset$ has capacity $q_s\geq 1$. Student $i\in\I$ has strict preferences $P_i$ over $\Sset\cup\{\emptyseto\}$, where $\emptyseto$ means being unmatched. Let $R_i$ be the corresponding weak preference relation. A school is acceptable if the student ranks it above $\emptyseto$.

Each school $s$ has a strict, public baseline priority order $\pi_s$. We write $i\mathrel{\pi_s}j$ when $i$ has higher baseline priority than $j$ at $s$. An exam score, a public lottery, or another observable rule may determine this order.

Schools find every student acceptable and prefer to fill every seat. Once priorities are fixed, school $s$ chooses the highest-priority $\min\{q_s,|A|\}$ students from any candidate set $A$.

\subsection{Priority-generating facts and submitted claims}

For each school $s$, let $\mathcal L_s=\{0,1,\ldots,K_s\}$, with $K_s\ge1$, be its eligibility tiers. We call the case $K_s=1$ for every $s$ the \emph{binary-tier model}. Tier $0$ gives no special priority, and larger numbers give higher priority: $K_s>K_s-1>\cdots>1>0$.

For each student--school pair $(i,s)$, let $\theta_{is}\in\mathcal L_s$ be student $i$'s \emph{true eligibility tier} at school $s$, and write $\theta_i=(\theta_{is})_{s\in\Sset}$ for the vector of her true tiers.

Student $i$ submits a preference report $\widehat P_i$ and a tier-report vector $c_i=(c_{is})_{s\in\Sset}$, where $c_{is}\in\mathcal L_s$.
A report $c_{is}=0$ makes no eligibility claim at school $s$. A report $c_{is}=r>0$ claims tier $r$. In practice, the student may report facts and the clearinghouse may convert them into this tier.

The true type of student $i$ is $t_i=(P_i,\theta_i)$, whereas her submitted report is $\widehat t_i=(\widehat P_i,c_i)$. The clearinghouse observes $\widehat t_i$, but not $\theta_i$.

\subsection{Verification technology}

Verification is perfect and reveals whether a claimed tier is correct. The result of a positive claim is $v_{is}(c,\theta)=\1\{c_{is}>0\text{ and }c_{is}=\theta_{is}\}$. A report $c_{is}=0$ requires no verification and generates no special priority.

The \emph{fully validated effective tier} is
\begin{equation}\label{eq:effective-tier}
 a_{is}(c,\theta)
 =c_{is}v_{is}(c,\theta)
 =c_{is}\1\{c_{is}>0\text{ and }c_{is}=\theta_{is}\}.
\end{equation}
Thus a valid claim receives the reported tier, while an invalid claim receives tier $0$. A failed claim does not fall to another positive tier. We refer to this rule as \emph{exact-tier verification}; it represents the punishment implicit in the verification technology.

The effective tier $a_{is}$ is an administrative state variable, not another report. It is the tier used after the claim is evaluated and may be unknown before verification.

\subsection{Validated priorities and school choice}

A higher effective tier ranks first, and baseline priority breaks ties within a tier. For a fixed profile $a$, define school $s$'s priority order $\succ_s^a$ by
\begin{equation}\label{eq:priority}
 i\succ_s^a j
 \quad\Longleftrightarrow\quad
 \begin{cases}
 a_{is}>a_{js},&\text{or}\\
 a_{is}=a_{js}\text{ and }i\mathrel{\pi_s}j.
 \end{cases}
\end{equation}
For a candidate set $A\subseteq\I$, let $\Ch_s^a(A)$ be the highest-priority $\min\{q_s,|A|\}$ students in $A$ under $\succ_s^a$.

\subsection{Full-verification benchmark}

A matching $\mu$ assigns each student to a school or to $\emptyseto$. No school receives more than its capacity, and $i\in\mu(s)$ if and only if $\mu(i)=s$.

Fix $a$ and $P$. A pair $(i,s)$ blocks $\mu$ if $i$ prefers $s$ to her assignment and either $s$ has an empty seat or $i$ outranks an assigned student under $\succ_s^a$. A matching is individually rational if every matched student prefers her assignment to being unmatched. It is \emph{stable} under $(P,\succ^a)$ if it is individually rational and has no blocking pair.

Full verification yields $a(c,\theta)$ and the benchmark $\mu^C(\widehat P,c;\theta)=\DA\bigl(\widehat P,\succ^{a(c,\theta)}\bigr)$, where $\DA$ denotes student-proposing deferred acceptance \citep{GaleShapley1962,Roth1982,AbdulkadirogluSonmez2003}. This benchmark represents the outcome the platform would obtain if it resolved every positive claim before beginning the assignment process.

Verification is perfect, always available when requested, and permanent. Each student--school check has unit cost, so the verification burden is the number of reports checked. Checks at different schools count separately, even for the same student. The model has no fixed verification budget. Students care only about their final assignments and bear no cost from reporting or verification.

\subsection{Illustrative examples}\label{sec:examples}

\begin{example}\label{ex:claimfact}
Suppose school $a$ has tiers $\{0,1,2\}$. If student $1$ has true tier $\theta_{1a}=2$ and reports $c_{1a}=2$, her effective tier is $a_{1a}=2$. If she reports $c_{1a}=1$, exact-tier verification rejects the claim and gives her tier $0$. A student with no eligibility also receives tier $0$ after making any positive claim.
\end{example}

\begin{example}\label{ex:one-seat}
One school $a$ has one seat. Both students prefer it to being unmatched, and baseline priority is $1\mathrel{\pi_a}2$.
School $a$ has tiers $\{0,1,2\}$, and student $2$ claims tier $c_{2a}=2$. A valid claim gives student $2$ the seat. An invalid claim gives both students tier $0$, so student $1$ receives the seat. The claim must therefore be resolved.
\end{example}

\begin{example}\label{ex:irrelevant}
There is one school $a$ with two seats and three students, with baseline priority $1\mathrel{\pi_a}2\mathrel{\pi_a}3$.
Student $1$ claims tier $1$, student $3$ claims tier $2$, and student $2$ makes no claim.

Student $1$ is selected whether her claim is valid or not. If it fails, she still has the highest baseline priority among tier-$0$ students. Student $3$'s claim matters: a valid claim gives the set $\{3,1\}$, while an invalid claim gives $\{1,2\}$. Full verification checks both claims, but checking only student $3$ determines the correct choice.
\end{example}

\begin{example}\label{ex:chain}
There are two one-seat schools $a$ and $b$ and three students. Preferences are
\[
 P_1:a\succ b\succ\emptyseto,
 \qquad
 P_2:a\succ b\succ\emptyseto,
 \qquad
 P_3:b\succ\emptyseto.
\]
At school $a$, baseline priority ranks student $1$ above student $2$. At school $b$, baseline priority is $2\mathrel{\pi_b}3\mathrel{\pi_b}1$.
Student $2$ reports a positive tier at $a$.

If the claim is invalid, student $1$ obtains $a$. Student $2$ then applies to $b$ and displaces student $3$. The outcome is $1\mapsto a$, $2\mapsto b$, and $3\mapsto\emptyseto$. If the false claim were accepted, student $2$ would obtain $a$, student $3$ would keep $b$, and student $1$ would be unmatched. A local verification error can therefore change rejections at other schools.
\end{example}

\section{Which Reports Are Decision-Relevant?}\label{sec:local}

Fix a school $s\in\Sset$ and the students $A\subseteq\I$ currently under consideration. Let $U_s(A)=\{i\in A:c_{is}>0\text{ and the report has not yet been verified}\}$ be the students whose positive claims remain unresolved, and let $R_s(A)=A\setminus U_s(A)$ be the resolved students. Resolved students include those who made no positive claim at $s$, claimants whose reports were verified earlier in the mechanism, and students whose reports are immediately verifiable from administrative records. For each $j\in R_s(A)$, let $\bar a_{js}\in\mathcal L_s$ be her known effective tier. It equals $0$ if she made no positive claim or her claim failed, and $c_{js}$ if her claim was verified.

When no confusion is possible, write $R$ and $U$. For each $i\in U$, write $r_i:=c_{is}>0$ for her claimed tier at $s$.

A \emph{completion} $x\in\{0,1\}^{U}$ assigns a possible verification result to every unresolved claim. It induces the effective tiers
\begin{equation}\label{eq:completion-tier}
 a_{js}(x)=
 \begin{cases}
 \bar a_{js},&j\in R,\\
 r_j,&j\in U\text{ and }x_j=1,\\
 0,&j\in U\text{ and }x_j=0.
 \end{cases}
\end{equation}
Let $\Ch_s(A;x)$ denote school $s$'s choice from $A$ under the priority order induced by \eqref{eq:completion-tier}.

\begin{definition}\label{def:pivotal}
Fix school $s$, candidate set $A$, and the current information state. An unresolved positive report of student $i\in U$ is \emph{pivotal for school $s$ at $A$} if there exists a completion $x_{-i}\in\{0,1\}^{U\setminus\{i\}}$ such that
\[
 \Ch_s(A;x_i=1,x_{-i})
 \neq
 \Ch_s(A;x_i=0,x_{-i}).
\]
\end{definition}

Pivotality is local: it depends on both the school and the set of applicants currently under consideration. It is not a probability statement and does not require a prior over claim validity. A claim is pivotal when its validity changes the current choice for at least one possible resolution of the other unresolved claims.

The following lemma shows that a claim is pivotal if and only if, for some completion of the other claims, its claimant is chosen when the claim is valid and not chosen when it is invalid.

\begin{lemma}\label{lem:single-report}
Fix school $s$, candidate set $A$, unresolved student $i\in U$, and a completion $x_{-i}$ of all other unresolved reports. Changing $x_i$ from $0$ to $1$ moves only student $i$ upward, from tier $0$ to tier $r_i>0$, and leaves the relative priority of every pair of other students unchanged. Hence $\Ch_s(A;1,x_{-i})\neq\Ch_s(A;0,x_{-i})$ if and only if $i$ is not selected when $x_i=0$ and is selected when $x_i=1$. If the chosen set changes, exactly one previously selected student is displaced by $i$.
\end{lemma}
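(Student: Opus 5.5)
The plan is to work directly from the definition of $\succ_s^a$ in \eqref{eq:priority} together with the completion tiers \eqref{eq:completion-tier}. Write $\succ^0$ and $\succ^1$ for the priority orders induced by $(0,x_{-i})$ and $(1,x_{-i})$. First, the two completions differ only in coordinate $i$, so $a_{js}$ is identical under both for every $j\neq i$. Since the comparison between two students $j,k\neq i$ depends only on $a_{js}$, $a_{ks}$ and $\pi_s$, the orders $\succ^0$ and $\succ^1$ agree on $A\setminus\{i\}$. For comparisons involving $i$, her tier rises from $0$ to $r_i>0$. Any $j$ with $i\succ^0 j$ has $a_{js}=0$ and $i\mathrel{\pi_s}j$, so $i\succ^1 j$ as well. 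Hence the set of students ranked above $i$ weakly shrinks, and this is the precise sense in which only $i$ moves upward.

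Next, I use that $\Ch_s(A;\cdot)$ takes the top $m=\min\{q_s,|A|\}$ elements, with $m$ the same in both cases. If $|A|\le q_s$, both choices equal $A$, so the choice does not change and $i$ is selected in both cases; the claimed equivalence holds trivially. Assume $|A|>q_s$, so $m=q_s$.

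There are three cases.
\begin{itemize}
\item If $i\in\Ch_s(A;0,x_{-i})$, then fewer than $q_s$ students are above $i$ under $\succ^0$, and so also under $\succ^1$; thus $i$ is still selected. The remaining $q_s-1$ seats go to the top $q_s-1$ students of $A\setminus\{i\}$ under the common restricted order, so the two chosen sets coincide.
\item If $i\notin\Ch_s(A;1,x_{-i})$, then by the monotonicity above $i$ is not chosen under $\succ^0$ either. Both sets are the top $q_s$ of $A\setminus\{i\}$ under the common order, so they again coincide.
\item Otherwise $i$ is unselected at $0$ and selected at $1$, so the sets differ because of $i$'s membership.
\end{itemize}
This yields the stated equivalence.

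For the displacement claim, in the last case $\Ch_s(A;0,x_{-i})$ is the top $q_s$ of $A\setminus\{i\}$ and $\Ch_s(A;1,x_{-i})\setminus\{i\}$ is the top $q_s-1$ of the same set under the same order. So the new set equals the old one minus its lowest-ranked member, plus $i$: exactly one student is displaced.

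The only delicate step is the preservation of $i$'s dominance relations, which rests on her tier increasing from exactly $0$. Strictness of $\pi_s$ handles ties, and the rest is bookkeeping with top-$q_s$ sets.
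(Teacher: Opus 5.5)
Your proof is correct and follows the same route as the paper's. The paper also argues that the chosen set is unchanged when $i$ is selected at tier $0$ or still rejected at tier $r_i$, so the only change is the cutoff crossing, in which one student is displaced. You make explicit what the paper leaves implicit: both choices reduce to top-$q_s$ or top-$(q_s-1)$ sets of $A\setminus\{i\}$ under a common order, the case $|A|\le q_s$ is trivial, and the displaced student is the lowest-ranked member of the original chosen set.
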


\begin{proof}
Only $i$ changes position. If she is selected at tier $0$, moving her up does not change the chosen set. If she is still rejected at tier $r_i$, the set again does not change. A change therefore occurs exactly when $i$ crosses the cutoff. She then enters and one selected student leaves.
\end{proof}

Fix $i\in U$ with claimed tier $r_i>0$. Pivotality compares her position at tier $r_i$ with her position at tier $0$.

First count the resolved students who outrank $i$ even when her claim is valid:
\begin{equation}\label{eq:L-general}
 L_s^A(i)
 :=
 \bigl|\{j\in R:\bar a_{js}>r_i\text{ or }(\bar a_{js}=r_i\text{ and }j\mathrel{\pi_s}i)\}\bigr|.
\end{equation}
These students outrank $i$ in both states.

Next define the set of resolved \emph{swing students}
\begin{equation}\label{eq:Dset}
 \mathcal D_s^A(i)
 :=
 \{j\in R:
 j\text{ outranks }i\text{ when }i\text{ is at tier }0,
 \text{ but not when }i\text{ is at tier }r_i\}.
\end{equation}
Each member of $\mathcal D_s^A(i)$ outranks $i$ only when $i$'s claim is invalid.

For unresolved students, define the set of \emph{joint movers}
\begin{equation}\label{eq:Jset}
 J_s^A(i)
 :=
 \{j\in U\setminus\{i\}:i\mathrel{\pi_s}j\text{ and }r_j>r_i\}.
\end{equation}
A student $j\in J_s^A(i)$ has lower baseline priority than $i$ but claims a higher tier. If $j$'s claim fails, she ranks below $i$ in both states of $i$'s claim. If it succeeds, she ranks above $i$ in both states. Thus $j$ moves with, rather than against, the cutoff comparison for $i$.

Let $S_s^A(i):=U\setminus\bigl(\{i\}\cup J_s^A(i)\bigr)$.
Every student in $S_s^A(i)$ can be made a swing student. If $j\mathrel{\pi_s}i$, set $j$'s claim to invalid. Then $j$ beats an invalid $i$ by baseline priority but loses to a valid $i$. If $i\mathrel{\pi_s}j$, then $j\notin J_s^A(i)$ implies $r_j\le r_i$. Set $j$'s claim to valid. Student $j$ then beats an invalid $i$, while a valid $i$ beats $j$ by tier or, at the same tier, by baseline priority.

Accordingly, define $\Delta_s^A(i):=|\mathcal D_s^A(i)|+|S_s^A(i)|$.
This is the largest number of students who can be made to outrank $i$ only when her claim is invalid.

\begin{proposition}\label{prop:pivotality-general}
Fix school $s$, candidate set $A$, the current information state, and unresolved student $i\in U$. The report of $i$ is pivotal for school $s$ at $A$ if and only if
\begin{equation}\label{eq:pivotal-general}
 \Delta_s^A(i)\ge1,
 \qquad
 L_s^A(i)<q_s,
 \qquad
 q_s\le L_s^A(i)+\Delta_s^A(i)+|J_s^A(i)|.
\end{equation}
\end{proposition}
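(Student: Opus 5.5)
By Lemma~\ref{lem:single-report}, the report of $i$ is pivotal if and only if there is a completion $x_{-i}$ under which $i$ is rejected at tier $0$ and selected at tier $r_i$. The plan is to recast this as a counting condition on ranks. Fix $x_{-i}$. Let $N_1(x_{-i})$ be the number of students in $A\setminus\{i\}$ who outrank $i$ when $x_i=1$, and $N_0(x_{-i})$ the number who outrank $i$ when $x_i=0$. Student $i$ is chosen exactly when fewer than $q_s$ students outrank her. So pivotality is equivalent to the existence of $x_{-i}$ with
\[
 N_1(x_{-i})<q_s\le N_0(x_{-i}).
\]
Since raising $i$ only removes students from the set above her, $N_0=N_1+(\text{number of students who outrank }i\text{ only when }x_i=0)$.

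The next step is to classify every $j\in A\setminus\{i\}$ by its contribution to $N_1$ and $N_0$.
\begin{itemize}
\item Resolved $j$ counted in $L_s^A(i)$ add $1$ to both counts.
\item Resolved $j$ in $\mathcal D_s^A(i)$ add $1$ only to $N_0$.
\item Other resolved students add nothing to either count.
\item An unresolved $j\in J_s^A(i)$ adds $1$ to both counts if $x_j=1$ and $0$ to both if $x_j=0$, as explained before the statement.
\item An unresolved $j\in S_s^A(i)$ adds $1$ only to $N_0$ under the swing choice described before the statement. Under the other choice it adds $1$ to both counts (if $j\mathrel{\pi_s}i$ and $x_j=1$) or to neither (if $i\mathrel{\pi_s}j$ and $x_j=0$).
\end{itemize}
I will verify each case directly from \eqref{eq:priority}. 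For example, if $j\mathrel{\pi_s}i$ and $x_j=1$, then $r_j>0$ and $j$ beats $i$ at tier $0$. Against $i$ at tier $r_i$, either $r_j>r_i$ or $r_j\le r_i$ with $j\mathrel{\pi_s}i$ at a tie, so $j$ still beats $i$ in both states. The point to confirm is that no unresolved student in $S_s^A(i)$ can add to $N_1$ alone, and none can add to $N_0$ without also being available as a swing student.

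\textbf{Necessity.} Suppose the report of $i$ is pivotal, with witness $x_{-i}$. Then $N_1\ge L_s^A(i)$, so $L_s^A(i)<q_s$. Also $N_0\le L_s^A(i)+|\mathcal D_s^A(i)|+|S_s^A(i)|+|J_s^A(i)|$, which gives the third inequality. Finally, $N_0>N_1$ requires at least one student who outranks $i$ only when her claim is invalid, so $\Delta_s^A(i)\ge1$.

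\textbf{Sufficiency.} Set all of $S_s^A(i)$ to its swing choice and let $k$ of the joint movers be valid. Then $N_1=L+k$ and $N_0=L+\Delta+k$, writing $L=L_s^A(i)$ and $\Delta=\Delta_s^A(i)$. We need some integer $k\in[0,|J|]$ with
\[
 L+k<q_s\le L+\Delta+k,
\]
that is, $k\in[q_s-L-\Delta,\,q_s-L-1]$. This interval is nonempty because $\Delta\ge1$. It meets $[0,|J|]$ because $q_s-L-1\ge0$ and $q_s-L-\Delta\le|J|$, which are the second and third conditions.

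The main obstacle is the careful case check in the classification step. In particular, ties in tier must be handled correctly through $\pi_s$. The argument also must not be able to add to $N_0$ without adding to $N_1$ beyond the swing students, since that is what makes $\Delta_s^A(i)$ the maximal gap.
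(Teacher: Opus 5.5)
Your proposal is correct and follows the paper's proof essentially step for step; the only difference is that the paper picks the specific number $t=\max\{0,q_s-L_s^A(i)-\Delta_s^A(i)\}$ of valid joint movers, which lies in your interval for $k$. One harmless slip in your classification: if $j\mathrel{\pi_s}i$, $x_j=1$, and $r_j<r_i$, then a valid $i$ outranks $j$, so $j$ is also a swing student rather than counting in both states; nothing breaks, because your argument only uses the swing configuration for sufficiency and upper bounds on contributions for necessity.
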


The conditions describe a cutoff crossing. The first requires someone to move across $i$ when her status changes. The second allows a valid $i$ to be selected. The third allows enough students to rank above an invalid $i$ to exclude her.

\begin{proof}
Write $L_i=L_s^A(i)$, $\Delta_i=\Delta_s^A(i)$, $J_i=J_s^A(i)$, and $S_i=S_s^A(i)$.

For sufficiency, assume the three inequalities in \eqref{eq:pivotal-general} and let $t:=\max\{0,q_s-L_i-\Delta_i\}$.
The third inequality gives $t\le |J_i|$. Since $L_i<q_s$ and $\Delta_i\ge1$, we also have $L_i+t<q_s$.

Set exactly $t$ claims in $J_i$ to valid and the rest to invalid. Configure every member of $S_i$ as a swing student as described above. Resolved tiers remain fixed.

When $i$'s claim is valid, exactly the $L_i$ always-ahead resolved students and the $t$ valid joint movers outrank her. Since $L_i+t<q_s$, she is selected.

Keep the same completion but make $i$'s claim invalid. Every member of $\mathcal D_s^A(i)$ and $S_i$ now joins the $L_i+t$ students already above her. Their number is $L_i+t+|\mathcal D_s^A(i)|+|S_i|=L_i+t+\Delta_i\ge q_s$, so $i$ is rejected. Lemma~\ref{lem:single-report} then implies that her claim is pivotal.

For necessity, suppose $i$ is pivotal. Under some common completion she is selected when valid and rejected when invalid. Let $h^1$ and $h^0$ count the students above her in these two states. Then $h^1<q_s\le h^0$.
Because all students counted by $L_i$ outrank a valid $i$, $h^1\ge L_i$ and hence $L_i<q_s$.

Only students who outrank an invalid $i$ but not a valid $i$ can raise $h^0-h^1$. They belong to $\mathcal D_s^A(i)$ if resolved and to $S_i$ if unresolved. Therefore $h^0-h^1\le |\mathcal D_s^A(i)|+|S_i|=\Delta_i$.
Since $h^0>h^1$, we obtain $\Delta_i\ge1$.

Finally, an invalid $i$ can be outranked by at most the $L_i$ always-ahead students, the $|\mathcal D_s^A(i)|$ resolved swing students, and the other $|U|-1$ unresolved students. Hence $h^0\le L_i+|\mathcal D_s^A(i)|+|U|-1=L_i+\Delta_i+|J_i|$.
Since $q_s\le h^0$, the third inequality follows.
\end{proof}

\begin{example}\label{ex:pivotality-tiered}
Return to Example~\ref{ex:irrelevant}. School $a$ has capacity two and baseline priority $1\mathrel{\pi_a}2\mathrel{\pi_a}3$.
Student $1$ has unresolved report $r_1=1$, student $3$ has unresolved report $r_3=2$, and student $2$ is resolved at tier $0$.

For student $1$, $L_a^A(1)=0$. Student $3$ has lower baseline priority but claims a higher tier, so $J_a^A(1)=\{3\}$. There are no swing students, and $\Delta_a^A(1)=0$. Student $1$'s claim is not pivotal.

For student $3$, $L_a^A(3)=0$. Student $2$ is a resolved swing student, and unresolved student $1$ can be made one by setting her claim to invalid, so $\Delta_a^A(3)=2$ and $J_a^A(3)=\emptyset$. The test becomes $\Delta_a^A(3)=2\ge1$ and $L_a^A(3)=0<q_a=2\le0+2+0$, so student $3$'s claim is pivotal.
\end{example}

The next lemma shows that once no unresolved claim is pivotal, the school's choice no longer depends on how the remaining claims are resolved.

\begin{lemma}\label{lem:invariant}
Fix school $s$, candidate set $A$, and a current information state. If no unresolved report is pivotal for school $s$ at $A$, then $\Ch_s(A;x)$ is the same for every completion $x\in\{0,1\}^{U}$.
\end{lemma}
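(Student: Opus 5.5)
The plan is to use the fact that the Boolean cube $\{0,1\}^{U}$ is connected by single-coordinate flips, and to apply the definition of pivotality one coordinate at a time.

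Fix two arbitrary completions $x,y\in\{0,1\}^{U}$. Enumerate the coordinates on which they differ as $i_1,\ldots,i_m$. Build the path $x=z^0,z^1,\ldots,z^m=y$, where $z^k$ is obtained from $z^{k-1}$ by replacing the $i_k$-th entry with $y_{i_k}$. Consecutive completions $z^{k-1}$ and $z^k$ then differ only in the report of $i_k$, and they share a common completion $x_{-i_k}:=z^{k-1}_{-i_k}=z^k_{-i_k}$ of the other unresolved reports.

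The key step is now immediate from Definition~\ref{def:pivotal}. Since $i_k$'s report is not pivotal for $s$ at $A$, for every completion $x_{-i_k}$ of $U\setminus\{i_k\}$ we have $\Ch_s(A;1,x_{-i_k})=\Ch_s(A;0,x_{-i_k})$. Hence $\Ch_s(A;z^{k-1})=\Ch_s(A;z^k)$ for each $k$. Chaining these equalities along the path gives $\Ch_s(A;x)=\Ch_s(A;y)$. Because $x$ and $y$ were arbitrary, the choice is constant on $\{0,1\}^{U}$.

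I expect no genuine obstacle. The only point to check is that pivotality quantifies over \emph{all} completions of the other claims, including those produced along the path, and not merely over completions consistent with the current information state. This holds because completions of $U\setminus\{i\}$ range over all of $\{0,1\}^{U\setminus\{i\}}$, and the resolved tiers $\bar a_{js}$ stay fixed throughout. Neither Lemma~\ref{lem:single-report} nor Proposition~\ref{prop:pivotality-general} is needed, although Lemma~\ref{lem:single-report} could be cited to interpret each step as a non-crossing of the cutoff. If $U=\varnothing$, the statement is trivial because there is a single completion.
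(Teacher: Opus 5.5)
Your proof is correct and is essentially the paper's argument: both join any two completions by single-coordinate flips and use the fact that each flipped claim is nonpivotal to keep the choice constant at every step. The paper phrases this as a contradiction, while you chain equalities directly.
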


\begin{proof}
Suppose completions $x$ and $y$ produce different choices. Move from one to the other by changing one coordinate at a time. At some step, changing one claim must change the choice. That claim is pivotal, a contradiction.
\end{proof}

Resolving a claim never creates a new pivotal claim. In particular, resolving a pivotal claim strictly shrinks the set of pivotal claims.

\begin{lemma}\label{lem:persistence}
Fix school $s$, candidate set $A$, and unresolved student $i\in U$. If $i$'s report is nonpivotal at the current information state, resolving the report of any other student $j\in U\setminus\{i\}$ cannot make $i$'s report pivotal, regardless of whether $j$ is found valid or invalid.
\end{lemma}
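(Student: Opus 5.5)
The plan is to argue directly from Definition~\ref{def:pivotal} rather than through the counting characterization of Proposition~\ref{prop:pivotality-general}. The key observation is that resolving $j$'s report shrinks the set of completions. Fix $j\in U\setminus\{i\}$ and suppose $j$'s report is resolved with outcome $v\in\{0,1\}$. The new information state has
\[
U'=U\setminus\{j\}, \qquad R'=R\cup\{j\},
\]
with $j$'s known tier $\bar a_{js}=r_j$ if $v=1$ and $\bar a_{js}=0$ if $v=0$. For every completion $x'_{-i}\in\{0,1\}^{U'\setminus\{i\}}$ of the new state, I would define the completion $x_{-i}$ of the old state by setting $x_k=x'_k$ for $k\in U'\setminus\{i\}$ and $x_j=v$.

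\textbf{Step 1: the two states induce identical tiers.} For each value of $x_i\in\{0,1\}$, the effective tiers given by \eqref{eq:completion-tier} under $(x_i,x'_{-i})$ in the new state coincide with those under $(x_i,x_{-i})$ in the old state. For students in $R$ and in $U'$ this is immediate. For $j$, the tier is $r_j$ or $0$ according to $v$ in both cases. Since the candidate set $A$, the capacity $q_s$, and the baseline order $\pi_s$ are unchanged, the priority order \eqref{eq:priority} is identical. Hence $\Ch_s(A;x_i,x'_{-i})$ in the new state equals $\Ch_s(A;x_i,x_{-i})$ in the old state.

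\textbf{Step 2: conclude by contradiction.} Suppose $i$ were pivotal after resolution. Then some $x'_{-i}$ would give different choices for $x_i=1$ and $x_i=0$. By Step~1, the lifted completion $x_{-i}$ would give different choices in the old state. That would make $i$ pivotal before resolution, contradicting the hypothesis. The argument works for either value of $v$, so it covers both the valid and the invalid outcome for $j$.

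\textbf{The main obstacle.} The only delicate point is bookkeeping. One must check that the definition of a completion treats a resolved student exactly as a fixed coordinate of an unresolved one, that is, that $\bar a_{js}$ matches what $x_j=v$ would assign. This holds precisely because of exact-tier verification: a failed claim gets tier $0$ and a verified claim gets $c_{js}=r_j$. I would also remark that the same argument, applied repeatedly, gives the preceding sentence of the paper. Resolving a pivotal claim removes it from $U$ and creates no new pivotal claims, so the pivotal set strictly shrinks.
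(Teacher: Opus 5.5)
Your proof is correct and follows essentially the same route as the paper: resolving $j$ simply fixes the coordinate $x_j=v$, so every completion of the new state corresponds to a completion of the old state with the same chosen sets, and no new witness to $i$'s pivotality can appear. The paper states this in two lines ("resolving $j$ only removes possible completions"), and your Step~1 supplies the tier-matching check that it leaves implicit.
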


\begin{proof}
If $i$ is nonpivotal, changing her status leaves the choice unchanged under every completion of the other claims. Resolving $j$ only removes possible completions. It cannot create one at which $i$ changes the choice.
\end{proof}

Pivotal claims need not form an interval in baseline priority. Consider a one-seat school $a$ with $1\mathrel{\pi_a}2\mathrel{\pi_a}3\mathrel{\pi_a}4$. Suppose student $1$ is already verified at tier $1$, while students $2,3,4$ have unresolved reports $r_2=2$, $r_3=1$, and $r_4=2$.
Student $2$'s claim is pivotal. Student $3$'s is not, because verified tier-$1$ student $1$ already outranks her. Student $4$'s claim is pivotal if student $2$'s claim fails. The pivotal set is therefore $\{2,4\}$, not an interval in baseline order.

In the binary-tier model, by contrast, the pivotal claims always form an interval. Let $T$ be the verified-valid claimants, $F$ the known tier-$0$ students, and $U$ the unresolved claimants. For $i\in U$, write $B_s^A(i)=\{j\in A:j\mathrel{\pi_s} i\}$.
Specializing Proposition~\ref{prop:pivotality-general} gives
\[
 i\text{ pivotal}
 \quad\Longleftrightarrow\quad
 |T\cap B_s^A(i)|<q_s
 < |T|+|U|+|F\cap B_s^A(i)|.
\]
Moving down the baseline order weakly increases both $|T\cap B_s^A(i)|$ and $|F\cap B_s^A(i)|$. The left-hand inequality therefore defines a prefix of unresolved claimants, while the right-hand inequality defines a suffix. Their intersection is an interval, possibly empty.

\subsection{Adaptive selective verification}

For each school $s$, fix a strict public audit order $\rho_s$ after claims are submitted. It may be exogenous or based on claimed tiers and baseline priority. The allocation results hold for every such order, which remains fixed during the local procedure.

\begin{definition}\label{def:local-verification}
Fix $s$, $A$, and $\rho_s$. While a pivotal claim remains, verify the highest such claimant under $\rho_s$. Fix a valid claim at its reported tier and an invalid claim at tier $0$, then recompute pivotality. Stop when no pivotal claim remains and return the choice shared by all remaining completions.
\end{definition}

Let $\LSV_s^{\rho_s}(A)$ denote the chosen set returned by this procedure.

\begin{proposition}\label{prop:local}
For every school $s$, candidate set $A$, information state, and public audit order $\rho_s$, local selective verification terminates after finitely many checks and returns the set that school $s$ would choose from $A$ if every positive report in $A$ were validated in advance.
\end{proposition}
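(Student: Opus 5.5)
The proof has two parts: termination by counting checks, and correctness by combining Lemma~\ref{lem:invariant} with the fact that the true validation profile is always one of the remaining completions.

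\emph{Termination.} Each iteration of the procedure in Definition~\ref{def:local-verification} verifies one currently unresolved positive report at $s$ and moves that student from $U$ to $R$. No claim is ever unresolved again. Hence the number of checks is at most $|U|$ at the starting information state, which is finite because $A$ is finite. Lemma~\ref{lem:persistence} gives a sharper bound, since resolving a claim never makes a nonpivotal claim pivotal. The pivotal set therefore strictly shrinks at each step, and the number of checks is at most the number of pivotal claims at the start. This refinement is not needed for finiteness.

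\emph{Correctness.} Let $\theta$ be the true eligibility profile, and let $x^*\in\{0,1\}^{U}$ record the true verification results of the claims still unresolved at termination: $x^*_j=v_{js}(c,\theta)$. I will argue by induction over the iterations that every resolved student's known tier $\bar a_{js}$ equals her fully validated effective tier $a_{js}(c,\theta)$ from \eqref{eq:effective-tier}. Students who made no positive claim have $\bar a_{js}=0=a_{js}$. Verification is perfect, so a verified claimant is fixed at $c_{js}$ if the claim is correct and at $0$ otherwise, which is exactly \eqref{eq:effective-tier}. The same holds for claims resolved earlier in the mechanism or from administrative records. It follows that the completion tiers $a_{js}(x^*)$ in \eqref{eq:completion-tier} coincide with $a_{js}(c,\theta)$ for every $j\in A$. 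The priority order they induce is therefore $\succ_s^{a(c,\theta)}$ restricted to $A$, and $\Ch_s(A;x^*)=\Ch_s^{a(c,\theta)}(A)$.

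At termination no unresolved report is pivotal. Lemma~\ref{lem:invariant} then says $\Ch_s(A;x)$ is the same for every completion $x$. The set returned is this common value, which in particular equals $\Ch_s(A;x^*)$, the full-validation choice. The argument does not depend on $\rho_s$, because the audit order only determines which pivotal claim is checked next and plays no role in either step.

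The main point needing care is the bookkeeping that identifies the information state with a restriction of the true profile. One must confirm that earlier resolutions carried into the state, including administrative ones, agree with exact-tier verification; this is an assumption of the model. One must also confirm that a student absent from $U$ contributes her true effective tier. Once this is settled, the rest follows directly from Lemma~\ref{lem:invariant}.
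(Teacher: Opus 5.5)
Your proof is correct and follows the paper's argument: the procedure terminates because each check removes a claimant from $U$, and the output is right because at termination Lemma~\ref{lem:invariant} makes the choice the same for every completion, and the true validation profile is one of those completions. Your added bookkeeping, showing that the resolved tiers coincide with $a(c,\theta)$, spells out what the paper leaves implicit here and later assumes explicitly in Lemma~\ref{lem:local-implementation}; the sharper bound via Lemma~\ref{lem:persistence} is valid but not needed for finiteness.
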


\begin{proof}
Each check removes one student from $U$, so the procedure ends. At that point, Lemma~\ref{lem:invariant} implies that every remaining completion gives the same choice. The true validation profile is one such completion, so this choice is exactly the full-verification choice from $A$.
\end{proof}

The audit order may change which claims are checked and how many checks are made, but not the chosen set.

\begin{example}\label{ex:audit-order}
Consider one school $a$ with capacity $q_a=2$ and baseline priority $1\mathrel{\pi_a}2\mathrel{\pi_a}3\mathrel{\pi_a}4$.
Students $1,3,4$ claim tier $1$, while student $2$ claims tier $0$. All three positive claims are initially pivotal.

If all three claims are valid, checking students $1$ and $3$ first determines the choice after two checks, while starting with student $4$ requires three. If all three claims are invalid, checking students $3$ and $4$ first determines $\{1,2\}$ after two checks, while starting with student $1$ again requires three.

These two profiles do not by themselves eliminate every fixed order. Under Definition~\ref{def:local-verification}, the order $3\mathrel{\rho_a}1\mathrel{\rho_a}4$ uses two checks in both: when all claims are valid it checks students $3$ and $1$, after which student $4$ is nonpivotal; when all claims are invalid it checks student $3$, which makes student $1$ nonpivotal, and then student $4$.

Consider therefore a third profile, in which only student $3$'s claim is valid. Checking students $3$ and $4$ resolves the choice $\{1,3\}$ after two checks, since student $1$ is then selected whether or not her own claim holds. Under $3\mathrel{\rho_a}1\mathrel{\rho_a}4$, by contrast, student $1$ is checked second and found invalid, which makes student $4$ pivotal and forces a third check. This profile rules out every order that begins with student $1$ and the order $3\mathrel{\rho_a}1\mathrel{\rho_a}4$, while the all-valid profile rules out every order that begins with student $4$ as well as $3\mathrel{\rho_a}4\mathrel{\rho_a}1$.

Between them, these two profiles exhaust the fixed orders. No fixed order therefore minimizes checks for every possible validation profile. The procedure must adapt to what earlier checks reveal. Every order still produces the same full-verification choice.
\end{example}

\subsection{Choosing the audit order}\label{sec:reported-priority-audit}

Proposition~\ref{prop:local} shows that the audit order never affects which students a school chooses; it affects only how many claims are checked before the choice is settled. Example~\ref{ex:audit-order} shows that no fixed order minimizes this number for every validation profile, because the best next check depends on what earlier checks reveal. We therefore ask a narrower question: which order is best when every outstanding claim is valid? This is the relevant case on the truthful-reporting path, where every positive claim is valid (Section~\ref{sec:incentives}).

When claims are valid, the characterization in Proposition~\ref{prop:pivotality-general} suggests an answer. Once a claimant is verified at a high tier, she outranks every valid claimant at a lower tier, and every valid claimant at the same tier with lower baseline priority. Each such verification raises $L_s^A(j)$ for those lower claimants $j$ and can push them to the point where $L_s^A(j)\ge q_s$, making their claims nonpivotal without checking them. Checking the strongest claims first exploits this effect, as Example~\ref{ex:tier-first} below illustrates. For school $s$ and submitted tiers $c$, define the \emph{reported-priority order} $\rho_s^c$ by
\begin{equation}\label{eq:reported-priority-order}
 i\mathrel{\rho_s^c}j
 \quad\Longleftrightarrow\quad
 \begin{cases}
 c_{is}>c_{js}, &\text{or}\\
 c_{is}=c_{js}\text{ and } i\mathrel{\pi_s}j.
 \end{cases}
\end{equation}
Higher claimed tiers come first, and baseline priority breaks ties within a tier. For an unresolved claimant, $c_{is}=r_i$, so $\rho_s^c$ ranks claimants exactly as the school would rank them if all claims were valid. In the binary-tier model, $\rho_s^c$ reduces to baseline priority among claimants.

The order $\rho_s^c$ depends only on submitted reports and the public baseline order, so it can be announced once claims are filed and held fixed throughout, as Definition~\ref{def:local-verification} requires. Using it as the audit order therefore yields a special case of local selective verification.

\begin{definition}\label{def:reported-priority-audit}
The \emph{reported-priority audit rule} is local selective verification (Definition~\ref{def:local-verification}) with audit order $\rho_s=\rho_s^c$.
\end{definition}

Because the rule is a case of local selective verification, it inherits exactness from Proposition~\ref{prop:local}. The question is only whether it uses as few checks as possible. We compare it with every \emph{pivotal-only} procedure: any adaptive procedure that, at each step, verifies some currently pivotal claim of its choosing and stops once no pivotal claim remains. Such procedures may choose the next claim using earlier results, so this class contains local selective verification under every fixed audit order.

\begin{proposition}
\label{prop:reported-priority-optimal}
Fix a school $s$, a candidate set $A$, and a current information state. Suppose every unresolved positive report in $U$ is valid. Then the reported-priority audit rule uses the fewest verifications among all pivotal-only procedures.
\end{proposition}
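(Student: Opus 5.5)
The plan is to prove a matching lower and upper bound. First I identify a set $M\subseteq U$ of claims that \emph{every} pivotal-only procedure must verify when all claims in $U$ are valid. Then I show that the reported-priority audit rule verifies only claims in $M$. Let $\mathbf 1$ denote the all-valid completion of $U$, which is the true one. Let $C^*=\Ch_s(A;\mathbf 1)$ be the full-verification choice. Define
\[
M:=\bigl\{i\in U:\ i\in \Ch_s(A;x_i=1,\mathbf 1_{-i})\ \text{and}\ i\notin \Ch_s(A;x_i=0,\mathbf 1_{-i})\bigr\},
\]
the claimants who straddle the cutoff when every other unresolved claim is valid.

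\textbf{Step 1 (the all-valid completion is the canonical witness).} I would show that for $i\in U$ the completion $\mathbf 1_{-i}$ is the most favourable witness for pivotality. Fix any completion $x_{-i}$ and switch some $j$ with $j\mathrel{\rho_s^c}i$ from invalid to valid. Then $j$ rises above both a valid and an invalid $i$. If instead $i\mathrel{\rho_s^c}j$, then $r_j\le r_i$, so a valid $j$ still ranks below a valid $i$ but above an invalid $i$. The first kind of switch can only hurt selection of a valid $i$, so I will not claim full monotonicity. What I need is narrower: for $i\in M$, the witness $\mathbf 1_{-i}$ is consistent with every verification outcome any procedure can observe, because every check returns ``valid.'' Hence $\mathbf 1_{-i}$ remains an admissible completion at every information state reached along the truthful realization. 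By Definition~\ref{def:pivotal} and Lemma~\ref{lem:single-report}, $i$ stays pivotal until $i$ itself is checked. A pivotal-only procedure stops only when no pivotal claim remains, so it must check every $i\in M$, and the number of checks is at least $|M|$.

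\textbf{Step 2 (the rule checks only $M$).} I would argue by induction along the rule's run. Suppose the rule is about to verify $j$, the $\rho_s^c$-highest currently pivotal claimant. Every unresolved claimant ranked above $j$ in $\rho_s^c$ is currently nonpivotal. Every claimant above $j$ that was already verified is fixed at its claimed tier, by the induction hypothesis and validity. I want to conclude $j\in M$.

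\emph{$j\in C^*$.} Suppose not. Then at least $q_s$ students outrank a valid $j$ under $\mathbf 1$, and by the definition of $\rho_s^c$ all of them lie above $j$ in the all-valid order. Those that are resolved count toward $L_s^A(j)$. For those still unresolved and nonpivotal, I would use Lemma~\ref{lem:invariant} applied to the subproblem in order to ``freeze'' them. The intended conclusion is that $L_s^A(j)$ plus the joint movers that must be valid already reaches $q_s$, contradicting the second inequality of Proposition~\ref{prop:pivotality-general}.

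\emph{$j$ is rejected under $(0,\mathbf 1_{-j})$.} Setting every other claim valid maximises the count $h^0$ of students above an invalid $j$. By the last part of the proof of Proposition~\ref{prop:pivotality-general}, $h^0$ is then exactly $L_j+\Delta_j+|J_j|$, after accounting for $S_j$ members made valid. The third inequality of Proposition~\ref{prop:pivotality-general} then gives rejection, and combining the two parts gives $j\in M$.

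Since the rule checks only elements of $M$ and every procedure checks all of $M$, the rule is optimal.

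\textbf{Main obstacle.} The delicate point is the first part of Step 2: showing that the rule never spends a check on a claimant outside $C^*$. Here higher-ranked claimants may still be unresolved but nonpivotal, so they are not counted in $L_s^A(j)$. Members of $J_s^A(j)$ in particular sit above $j$ only if valid. I would first try to show that any unresolved claimant ranked above $j$ in $\rho_s^c$ that is nonpivotal must be selected under every completion, so that it effectively behaves as resolved. If that fails, I would instead run a direct exchange argument on the binary-tier interval structure and then extend it tier by tier.
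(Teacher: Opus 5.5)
Your architecture is sound, but the step you flag as the main obstacle is the heart of the proof and is not established. Two parts are correct as written:
\begin{itemize}
\item \textbf{Step 1.} Under the all-valid truth, $\mathbf 1_{-i}$ restricted to the still-unresolved claims is always an admissible witness, so every $i\in M$ stays pivotal until it is checked.
\item \textbf{Rejection half of Step 2.} Making every other claim valid maximizes $h^0$, which then equals $L_s^A(j)+\Delta_s^A(j)+|J_s^A(j)|\ge q_s$.
\end{itemize}

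The gap is the other half: showing that the claimant $j$ chosen by the rule lies in $C^*$. The route you sketch cannot work. Under $\mathbf 1$, the students above a valid $j$ are the $L_s^A(j)$ resolved ones plus the set $H$ of unresolved claimants above $j$ in $\rho_s^c$. The members of $H$ are not counted in $L_s^A(j)$. They are not all joint movers of $j$: those with $k\mathrel{\pi_s}j$ lie in $S_s^A(j)$. Nor can they be ``frozen'' by Lemma~\ref{lem:invariant}, which needs every unresolved report to be nonpivotal, whereas $j$ is pivotal. More fundamentally, $L_s^A(j)+|H|\ge q_s$ is perfectly compatible with $L_s^A(j)<q_s$. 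So no contradiction with $j$'s own inequalities is available; it must come from the nonpivotality of the claimants in $H$.

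\textbf{The missing step.} Take $k\in H$.
\begin{itemize}
\item Since $j$ lies below $k$ in $\rho_s^c$, $j\notin J_s^A(k)$. Hence $j\in S_s^A(k)$ and $\Delta_s^A(k)\ge1$.
\item Any resolved student who outranks a valid $k$ also outranks a valid $j$. Hence $L_s^A(k)\le L_s^A(j)<q_s$.
\item Nonpivotality of $k$ therefore forces $q_s>L_s^A(k)+\Delta_s^A(k)+|J_s^A(k)|$. That is, an invalid $k$ is selected under every completion, so your fallback conjecture is true.
\end{itemize}
Apply this to the completion in which $k$ is invalid and every other unresolved claim is valid. The following all outrank the invalid $k$: the $L_s^A(j)$ resolved students above a valid $j$ (all with positive tiers, since $r_j>0$), the other members of $H$, and $j$ itself. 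Hence $L_s^A(j)+|H|<q_s$, which says exactly that $j\in C^*$. If $H=\emptyset$, then $L_s^A(j)<q_s$ already suffices.

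\textbf{Comparison with the paper.} With this patch your proof is complete, and it follows a genuinely different route. The paper uses an adjacent-interchange argument: it moves the highest pivotal claimant to the front of a verification-minimal sequence without increasing its length, then recurses. You instead show that the reported-priority rule checks \emph{exactly} the set $M$, which every pivotal-only procedure must check. This avoids the exchange bookkeeping and gives the minimum explicitly as $|M|$. It also yields Corollary~\ref{cor:capacity-check-bound} immediately, since $M\subseteq C^*$ and so $|M|\le q_s$.
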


\begin{proof}
Write $M_i:=L_s^A(i)+\Delta_s^A(i)+|J_s^A(i)|$ for the maximum number of students that can be placed above an invalid claimant $i$ across completions. By Proposition~\ref{prop:pivotality-general}, an unresolved report of $i$ is pivotal exactly when $\Delta_s^A(i)\ge1$, $L_s^A(i)<q_s$, and $q_s\le M_i$.

Let $h$ be the highest currently pivotal claimant according to $\rho_s^c$. Two observations drive the proof: validating a lower claimant does not change the pivotality of $h$, whereas validating $h$ can eliminate the pivotality of lower claimants.

First, consider any unresolved claimant $j$ with $h\mathrel{\rho_s^c}j$.
Then $j\notin J_s^A(h)$. Indeed, $h\mathrel{\rho_s^c}j$ implies either $r_h>r_j$, or $r_h=r_j$ and $h\mathrel{\pi_s}j$, whereas membership in $J_s^A(h)$ would require $j$ to lie below $h$ in baseline priority and to report a strictly higher tier. Thus $j\in S_s^A(h)$. If $j$ is verified and found valid, she becomes a resolved swing student for $h$: her positive validated tier puts her above an invalid $h$, while a valid $h$ remains above her either by the higher reported tier or, at an equal tier, by baseline priority. Consequently, the contribution of $j$ simply moves from $S_s^A(h)$ to $\mathcal D_s^A(h)$. Therefore
\begin{equation}\label{eq:lower-audit-invariance}
 L_h'=L_h,
 \qquad
 \Delta_h'=\Delta_h,
 \qquad
 |J_h'|=|J_h|.
\end{equation}
In particular, validating any lower reported-priority claimant leaves the pivotality of $h$ unchanged. Every unresolved claimant above $h$ in $\rho_s^c$ is currently nonpivotal and, by Lemma~\ref{lem:persistence}, remains nonpivotal; a pivotal-only procedure therefore verifies only claimants below $h$ until it verifies $h$. Hence $h$ stays pivotal, and every terminating pivotal-only procedure must eventually verify $h$.

Second, consider the effect of validating $h$ on a lower reported-priority unresolved claimant $j$. Because $h\mathrel{\rho_s^c}j$, a valid $h$ necessarily outranks a valid $j$. There are two cases. If $h\mathrel{\pi_s}j$, then before verification $h\in S_s^A(j)$; after verification $h$ becomes a resolved student who necessarily outranks a valid $j$. Hence $L_j'=L_j+1$, $\Delta_j'=\Delta_j-1$, and $|J_j'|=|J_j|$. If instead $j\mathrel{\pi_s}h$, then $h\mathrel{\rho_s^c}j$ requires $r_h>r_j$, so $h\in J_s^A(j)$ before verification. Once validated, $h$ again becomes a resolved student who necessarily outranks a valid $j$, and $L_j'=L_j+1$, $\Delta_j'=\Delta_j$, and $|J_j'|=|J_j|-1$.
Thus in both cases
\begin{equation}\label{eq:higher-audit-effect}
 L_j'=L_j+1,
 \qquad
 \Delta_j'\le\Delta_j,
 \qquad
 M_j'=M_j.
\end{equation}
Validating a higher reported-priority claimant can therefore make a lower claimant nonpivotal, but it cannot make a lower nonpivotal claimant pivotal.

We now use an exchange argument. Take a verification-minimal terminating pivotal-only sequence. Since $h$ must eventually be verified, consider the verification immediately preceding $h$ whenever $h$ is not first; denote that claimant by $j$; by the previous paragraph, $j$ lies below $h$ in $\rho_s^c$. Since $h$ is pivotal at every information state preceding its verification, it is pivotal just before $j$ is verified, so we may reverse the pair and verify $h$ first.

If $j$ remains pivotal after $h$, verify $j$ next. Since both reports are valid, the information state after the two checks is exactly the same under the orders $j,h$ and $h,j$.

If instead $j$ becomes nonpivotal after $h$, omit the verification of $j$. We must check that leaving $j$ unresolved does not create additional verification requirements. Since $j$ was pivotal before $h$, its exclusion condition $q_s\le M_j$ held; by \eqref{eq:higher-audit-effect}, $M_j$ is unchanged when $h$ is validated. Hence $j$ can become nonpivotal only because either $L_j'\ge q_s$ or $\Delta_j'=0$.

If $L_j'\ge q_s$, then every unresolved claimant $k$ below $j$ in $\rho_s^c$ satisfies $L_k'\ge L_j'\ge q_s$. Indeed, every resolved student who necessarily outranks a valid $j$ also necessarily outranks a valid claimant with weakly lower reported priority. Thus every such $k$ is already nonpivotal. If instead $\Delta_j'=0$, then both $\mathcal D_s^A(j)$ and $S_s^A(j)$ are empty after the verification of $h$, so every other unresolved claimant belongs to $J_s^A(j)$ and hence reports a strictly higher tier than $j$. Thus there is no unresolved claimant below $j$ in $\rho_s^c$.

It remains to consider unresolved claimants above $j$. For any such claimant $k$, the unresolved valid report of the lower reported-priority student $j$ belongs to $S_s^A(k)$; fixing $j$ at its valid tier merely moves its contribution from $S_s^A(k)$ to $\mathcal D_s^A(k)$ and therefore leaves $L_k$, $\Delta_k$, and $|J_k|$ unchanged, exactly as in \eqref{eq:lower-audit-invariance}. Hence the pivotality of every claimant above $j$ is the same whether $j$ is verified valid or left unresolved, and the same comparison applies at every later information state, since subsequent checks affect both branches identically. We have therefore covered every remaining unresolved claimant: skipping $j$ creates no new pivotal report. Moreover, by Lemma~\ref{lem:persistence}, the reports that are already nonpivotal remain so as subsequent verifications are performed. The remainder of the original sequence can thus be followed without any additional check.

This adjacent interchange moves $h$ one position earlier without increasing the number of checks, and it may strictly reduce them. Repeating the interchange moves $h$ to the first position of a verification-minimal sequence. After verifying $h$, every remaining unresolved report is still valid, so the same argument applies recursively to the new information state. Induction on the number of unresolved reports yields the reported-priority audit rule.
\end{proof}

\begin{corollary}\label{cor:binary-top-down}
In the binary-tier model, all positive claimants report tier $1$. Hence $\rho_s^c$ coincides with baseline priority $\pi_s$, and Proposition~\ref{prop:reported-priority-optimal} reduces to top-to-bottom baseline-priority auditing.
\end{corollary}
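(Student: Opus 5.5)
The corollary has two parts, a structural identity and a reduction, and I would prove them in that order.

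\emph{Step 1: claimed tiers are all equal.} In the binary-tier model $K_s=1$, so $\mathcal L_s=\{0,1\}$. Every positive report therefore satisfies $c_{is}=1$. Hence $r_i=1$ for every $i\in U$ at every school $s$ and every candidate set $A$.

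\emph{Step 2: $\rho_s^c$ collapses to $\pi_s$.} Look at the definition of $\rho_s^c$ in \eqref{eq:reported-priority-order}, restricted to claimants. The first clause, $c_{is}>c_{js}$, can never hold between two claimants, since both report tier $1$. Only the tie-breaking clause remains, so for claimants $i\mathrel{\rho_s^c}j$ if and only if $i\mathrel{\pi_s}j$.

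Non-claimants are never audited. They report $c_{is}=0$, so they never belong to $U$. The audit order on them is therefore irrelevant to Definition~\ref{def:local-verification}. Consequently the reported-priority audit rule is exactly local selective verification with audit order $\pi_s$ among claimants. At each step it checks the highest-baseline-priority pivotal claimant.

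\emph{Step 3: inherit optimality.} Proposition~\ref{prop:reported-priority-optimal} applies verbatim, since the binary model is a special case with $K_s=1$. It gives that whenever every unresolved report is valid, this top-to-bottom baseline audit uses the fewest checks among pivotal-only procedures.

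Optionally, I would add a remark tying this to the interval structure derived after Lemma~\ref{lem:persistence}. There the pivotal claimants form a baseline interval, so auditing top-down means always checking the top endpoint of that interval. This illustration is not needed for the proof.

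\emph{Main obstacle.} There is no real obstacle. The only care needed is noting that ties in claimed tier are exhaustive among claimants, and that non-claimants never enter $U$.
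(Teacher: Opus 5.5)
Your proposal is correct and follows the paper's reasoning. The paper treats the corollary as immediate: $K_s=1$ forces every positive report to equal $1$, so $\rho_s^c$ agrees with $\pi_s$ on claimants, which are the only students that can ever lie in $U$, and Proposition~\ref{prop:reported-priority-optimal} then applies unchanged. Your explicit caveat that the coincidence holds only among claimants is the right one, and it matches the paper's earlier remark that $\rho_s^c$ ``reduces to baseline priority among claimants.''
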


The proposition is deliberately local and conditional. It fixes a school and candidate set, assumes that the outstanding positive reports are valid, and compares procedures that verify only currently pivotal reports. When reports may be invalid, no fixed order is best for every realized validation profile, as Example~\ref{ex:audit-order} shows even in the binary case. Nor does the result compare reported-priority auditing with procedures that deliberately query currently nonpivotal reports for their informational value. These restrictions do not affect exact implementation, which holds for every audit order.

\begin{example}\label{ex:tier-first}
Consider one school $a$ with one seat and baseline priority $1\mathrel{\pi_a}2\mathrel{\pi_a}3$. Student $1$ is resolved at tier $0$, while students $2$ and $3$ have unresolved reports $r_2=1$ and $r_3=2$. Suppose both reports are valid. Initially both reports are pivotal. A baseline-only top-to-bottom rule checks student $2$ first; student $3$ remains pivotal, so two verifications are required. The reported-priority rule instead checks student $3$ first because $r_3>r_2$. Once student $3$ is validated at tier $2$, she obtains the seat regardless of student $2$'s unresolved tier-$1$ report, so student $2$ becomes nonpivotal and one verification suffices. The example illustrates why the reported tier must be ordered before baseline priority in the general model.
\end{example}

The same ordering argument gives a simple bound on verification burden. Let $V_s^{\mathrm{RP}}(A)$ denote the number of reports checked during one invocation of reported-priority auditing at school $s$ with fixed candidate set $A$.

\begin{corollary}\label{cor:capacity-check-bound}
Fix school $s$, candidate set $A$, and a current information state. If every unresolved positive report is valid, then
\[
 V_s^{\mathrm{RP}}(A)\le \min\{q_s,|U|\}.
\]
\end{corollary}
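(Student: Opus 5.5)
The bound $V\le|U|$ is immediate, since each check resolves one unresolved report and the rule checks only unresolved reports. The substance is the bound $V\le q_s$. My plan is to show that, when all outstanding reports are valid, every report checked by the reported-priority rule ends up among the students who necessarily outrank every later-checked claimant. The count of such students is bounded by the condition $L_s^A(j)<q_s$ from Proposition~\ref{prop:pivotality-general}.

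Let $h_1,h_2,\dots,h_V$ be the claimants verified by the rule, in order. The first key step is monotonicity: $h_1\mathrel{\rho_s^c}h_2\mathrel{\rho_s^c}\cdots$. I would argue this from the proof of Proposition~\ref{prop:reported-priority-optimal}. At each stage, $h_k$ is the highest currently pivotal claimant under $\rho_s^c$. Any unresolved claimant above $h_k$ is nonpivotal, and by Lemma~\ref{lem:persistence} it stays nonpivotal after further checks. Hence the next checked claimant $h_{k+1}$ lies strictly below $h_k$ in $\rho_s^c$.

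Next I would consider the last checked claimant $h_V$, which is pivotal just before her verification. By Proposition~\ref{prop:pivotality-general}, this gives $L'_s(h_V)<q_s$, where $L'$ is computed at that information state. Each earlier checked claimant $h_k$, $k<V$, has been verified valid at tier $r_{h_k}$. Because $h_k\mathrel{\rho_s^c}h_V$, either $r_{h_k}>r_{h_V}$, or the tiers are equal and $h_k\mathrel{\pi_s}h_V$. In both cases $h_k$ is a resolved student counted in $L'_s(h_V)$ by \eqref{eq:L-general}. These are distinct students, so $V-1\le L'_s(h_V)\le q_s-1$, which gives $V\le q_s$.

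The main obstacle is justifying the monotonicity step rigorously, especially that $h_{k+1}$ cannot lie above $h_k$. That would require a claimant above $h_k$ to become pivotal after $h_k$ is verified. This is excluded only because such a claimant was already nonpivotal when $h_k$ was chosen, and Lemma~\ref{lem:persistence} forbids resolving another report from making it pivotal. Validity of the checked reports is then used only to place each $h_k$ at her reported tier, so that she counts in $L'_s(h_V)$.
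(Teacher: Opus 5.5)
Your proof is correct and follows essentially the same route as the paper's. Both show that the audited claimants appear in strictly decreasing $\rho_s^c$ order: a pivotal higher claimant would have been chosen first, and a nonpivotal one stays nonpivotal by Lemma~\ref{lem:persistence}. Both then count the earlier validated claimants in $L_s^A$ of the current claimant and use $L_s^A<q_s$ from Proposition~\ref{prop:pivotality-general}; the only difference is that you apply this count once at the final audit $h_V$, while the paper applies it at every step $k$.
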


\begin{proof}
Suppose the rule audits $i_1,\ldots,i_m$ in this order. At the $k$th audit, every previously verified claimant $i_1,\ldots,i_{k-1}$ is higher than $i_k$ in reported-priority order. To see this, suppose instead that some earlier audited claimant $i_j$ with $j<k$ were lower than $i_k$ under $\rho_s^c$. At the step at which $i_j$ was audited, $i_k$ was either pivotal or not. If $i_k$ was pivotal, the reported-priority rule would have audited $i_k$ rather than the lower claimant $i_j$. If $i_k$ was nonpivotal, then, since the candidate set $A$ is fixed throughout the procedure, Lemma~\ref{lem:persistence} implies that $i_k$ remains nonpivotal after every subsequent verification, contradicting the fact that $i_k$ is audited at step $k$. Both cases are impossible, so every previously audited claimant lies above $i_k$ under $\rho_s^c$. Because the earlier audited reports are valid, the corresponding students are resolved at their reported tiers and necessarily outrank a valid $i_k$. Hence $L_s^A(i_k)\ge k-1$. Pivotality of $i_k$ requires $L_s^A(i_k)<q_s$ by Proposition~\ref{prop:pivotality-general}, so $k\le q_s$. The bound by $|U|$ is immediate.
\end{proof}

The bound is tight. For example, take a school with capacity $q_s=q$, one tier-$0$ student with highest baseline priority, and $q$ lower-baseline students who all file the same positive binary claim. If all $q$ claims are valid, every claimant remains pivotal until all $q$ reports have been checked.

The all-valid hypothesis creates a useful link between auditing and incentives. The reported-priority rule need not be optimal for arbitrary false claims, but it is verification-minimal on a report profile generated by truthful tier reporting. More generally, a mechanism that makes truthful reporting a dominant-strategy recommendation supplies the validity condition under which Proposition~\ref{prop:reported-priority-optimal} and Corollary~\ref{cor:capacity-check-bound} apply. The next section embeds selective verification in deferred acceptance. Section~\ref{sec:incentives} then establishes exactly this incentive property.

\section{Implementing Selective Verification in Deferred Acceptance}\label{sec:svda}

We now place local selective verification inside student-proposing DA. Let $\rho=(\rho_s)_{s\in\Sset}$ be the public audit orders, fixed after the tier reports are submitted. School $s$ may use the reported-priority order $\rho_s^c$ from Subsection~\ref{sec:reported-priority-audit}.

\begin{definition}\label{def:svda}
Given a submitted preference profile $\widehat P$, a tier-report profile $c$, a true eligibility profile $\theta$, and public audit orders $\rho$ fixed for that submitted profile, \emph{Selective-Verification Deferred Acceptance} proceeds as follows.

\begin{enumerate}[label=\textup{(\roman*)}]
    \item In round $1$, every student applies to her most-preferred acceptable school under $\widehat P_i$, if one exists.

    \item In each round $t\ge1$, school $s$ forms the candidate set $A_s^t=H_s^{t-1}\cup N_s^t$, where $H_s^{t-1}$ is the set tentatively held by $s$ after the preceding round, $N_s^t$ is the set of new applicants to $s$ in round $t$, and $H_s^0=\emptyset$.

    \item School $s$ applies local selective verification to $A_s^t$ using audit order $\rho_s$ and every verification result previously obtained at $s$. It tentatively holds the resulting set $H_s^t$ and rejects every student in $A_s^t\setminus H_s^t$.

    \item In the next round, every rejected student applies to her most-preferred acceptable school under $\widehat P_i$ that has not previously rejected her, if one exists.

    \item The procedure terminates when no student makes a new application. The tentative holdings at termination constitute the final matching.
\end{enumerate}
\end{definition}

Verification results are permanent. A nonpivotal claim is postponed, not ignored forever. If later applicants change the candidate set, the school recomputes pivotality before its next choice.

Denote the resulting matching by $\SVDA_{\rho}(\widehat P,c;\theta)$. The semicolon separates the submitted reports from the true tiers $\theta$, which determine the results of any checks.

\subsection{Outcome and path equivalence}

\begin{lemma}\label{lem:local-implementation}
Fix school $s$, a current candidate set $A$, submitted tier reports $c$, and true eligibility $\theta$. Suppose all verification results obtained before the current school choice are correct. Then local selective verification applied to $A$ returns $\Ch_s^{a(c,\theta)}(A)$, the choice that school $s$ would make from $A$ under the fully validated ordered-tier priority profile.
\end{lemma}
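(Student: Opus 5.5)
The plan is to reduce the statement to Proposition~\ref{prop:local}, checking that the information state inherited from earlier rounds fits that proposition's hypotheses. First, I would describe the state at the moment school $s$ faces $A$. Every student in $A$ with $c_{is}=0$ is resolved at tier $0$. Every positive claimant whose report was checked earlier at $s$ is resolved at $\bar a_{is}$. By the hypothesis that earlier verification results are correct, $\bar a_{is}=c_{is}v_{is}(c,\theta)=a_{is}(c,\theta)$ in both cases. The same holds for any report resolved from administrative records, provided those records are also correct, which the hypothesis covers. Thus every resolved student's known tier coincides with her fully validated effective tier from \eqref{eq:effective-tier}.

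Second, I would identify the true completion. Define $x^*\in\{0,1\}^U$ by $x^*_i=v_{is}(c,\theta)$ for each unresolved claimant $i\in U$. By \eqref{eq:completion-tier}, $a_{js}(x^*)=a_{js}(c,\theta)$ for every $j\in A$. Resolved students agree by the first step, and unresolved students agree because $r_j x^*_j=c_{js}v_{js}$. The priority order induced by $x^*$ on $A$ is therefore exactly $\succ_s^{a(c,\theta)}$ restricted to $A$, so $\Ch_s(A;x^*)=\Ch_s^{a(c,\theta)}(A)$.

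Third, I would run the local procedure. Each check during the procedure is performed by the perfect technology, so it reveals $x^*_i$. The set of completions consistent with the state therefore always contains $x^*$, and resolved tiers stay correct. Each check shrinks $U$, so the procedure terminates. At termination, Lemma~\ref{lem:invariant} gives a single choice common to all remaining completions. Since $x^*$ is among them, this common choice is $\Ch_s^{a(c,\theta)}(A)$.

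The only delicate point is the first step. Proposition~\ref{prop:local} implicitly assumes that the resolved tiers $\bar a$ are the true ones, and in the DA embedding those tiers come from earlier rounds. The correctness hypothesis is what closes this gap. I would state explicitly that results are stored per student--school pair, so no result obtained at another school is reused at $s$.
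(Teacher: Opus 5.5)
Your proposal is correct and follows the paper's route. You reduce to Proposition~\ref{prop:local} and use the correctness hypothesis to ensure that the inherited resolved tiers coincide with $a(c,\theta)$, so the true validation profile is among the remaining completions; you simply make explicit, through the true completion $x^*$ and Lemma~\ref{lem:invariant}, what the paper's two-line proof leaves implicit.
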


\begin{proof}
Proposition~\ref{prop:local} says that the local procedure returns the choice obtained by checking every claim in $A$. Earlier results are correct by assumption, and $c$ and $\theta$ determine the unresolved results. The choice is therefore $\Ch_s^{a(c,\theta)}(A)$.
\end{proof}

\begin{corollary}\label{cor:svda-equivalence}
Fix a submitted preference profile $\widehat P$, tier reports $c$, true eligibility $\theta$, and public audit orders $\rho$. Then
\begin{equation}\label{eq:svda-equivalence}
 \SVDA_{\rho}(\widehat P,c;\theta)
 =\mu^C(\widehat P,c;\theta)
 =\DA\bigl(\widehat P,\succ^{a(c,\theta)}\bigr).
\end{equation}
Moreover, SVDA and full-validation student-proposing DA generate the same applications, tentative holdings, and rejections in every round. In particular, the final matching is independent of the public audit orders $\rho$.
\end{corollary}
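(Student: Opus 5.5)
The plan is an induction on the round number $t$. I would compare SVDA with full-validation student-proposing DA run on $(\widehat P,\succ^{a(c,\theta)})$, in which each school chooses by $\Ch_s^{a(c,\theta)}$ from its candidate set. The induction hypothesis at the start of round $t$ has three parts. First, the two processes have produced the same applications and the same tentative holdings $H_s^{t-1}$ at every school. Second, they have produced the same sets of rejections. Third, every verification result SVDA has recorded so far is correct, meaning it equals $v_{is}(c,\theta)$. The base case $t=1$ is immediate. Both processes start with $H_s^0=\emptyset$, no checks have been made, and round-$1$ applications depend only on $\widehat P$, which is common to both.

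For the inductive step, fix round $t$. Since the holdings and rejections agree up to $t-1$, each rejected student applies to the same next school under $\widehat P_i$ in both processes. The new-applicant sets $N_s^t$ therefore coincide, and so do the candidate sets $A_s^t=H_s^{t-1}\cup N_s^t$. At each school $s$, SVDA runs local selective verification on $A_s^t$ using $\rho_s$ and the earlier results at $s$. Those results are correct by hypothesis, so Lemma~\ref{lem:local-implementation} applies and the procedure returns $\Ch_s^{a(c,\theta)}(A_s^t)$. This is exactly the full-validation DA holding. It follows that the holdings $H_s^t$ and the rejected sets $A_s^t\setminus H_s^t$ agree in round $t$.

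To keep the third part of the hypothesis, note that every new check is carried out with the perfect technology, so it returns $v_{is}(c,\theta)$ and the recorded results stay correct. Verification is permanent and results are stored separately for each school, so checks at one school never corrupt the information state at another.

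Termination comes next. Both processes stop at the first round with no new application, and this round is the same because the rejection sets agree. Hence the final holdings coincide, which gives $\SVDA_\rho(\widehat P,c;\theta)=\DA(\widehat P,\succ^{a(c,\theta)})=\mu^C(\widehat P,c;\theta)$. The second equality is simply the definition of $\mu^C$. The full-validation path does not involve $\rho$, so the path equivalence just proved makes the whole SVDA path independent of $\rho$, and the final matching in particular.

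The main obstacle is making sure Lemma~\ref{lem:local-implementation} really applies at every school in every round. This needs the invariant that all stored results are correct, and it needs the fact that the candidate set is fixed during a single local invocation, which Proposition~\ref{prop:local} relies on. Both hold here. Within a round each school's choice is computed from a fixed $A_s^t$, and the simultaneous choices at different schools do not interact until the next round's applications are formed. I would state this explicitly so the induction closes cleanly.
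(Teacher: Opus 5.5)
Your proposal is correct and follows the paper's proof. Both argue by induction over DA rounds: Lemma~\ref{lem:local-implementation} makes each school's local selective-verification choice equal $\Ch_s^{a(c,\theta)}(A_s^t)$, so applications, holdings, rejections and the final matching coincide with full-validation DA, and the outcome is independent of $\rho$ because that path does not involve $\rho$. The only difference is that you carry the correctness of stored verification results explicitly as part of the inductive invariant, which the paper leaves implicit in the perfect-verification assumption; this is a harmless and slightly more careful addition.
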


\begin{proof}
By Lemma~\ref{lem:local-implementation}, each school makes the full-verification choice from its current candidate set. The two procedures therefore have the same holdings and rejections in round $1$. If their histories agree through round $t$, the same students apply next, every school faces the same candidates in round $t+1$, and the choices again agree. Induction gives the same applications, holdings, rejections, and final matching. Local choices do not depend on the audit order, so neither does the outcome.
\end{proof}

When only the final matching matters, we write $\SVDA(\widehat P,c;\theta)$ and omit the audit order.

\begin{corollary}\label{cor:stability}
If preferences are reported truthfully, $\widehat P=P$, then for every $c$ and $\theta$ the SVDA outcome is stable under $(P,\succ^{a(c,\theta)})$ and is student-optimal among stable matchings under that priority profile.
\end{corollary}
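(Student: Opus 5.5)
The plan is to reduce the statement to the classical properties of student-proposing deferred acceptance, using the equivalence already established in Corollary~\ref{cor:svda-equivalence}. Fix $c$ and $\theta$, and set $a=a(c,\theta)$. By \eqref{eq:effective-tier}, $a$ is a fixed profile of effective tiers. Through \eqref{eq:priority}, it induces for each school $s$ a priority order $\succ_s^a$. This order is strict and complete: ties in tier are broken by the strict baseline order $\pi_s$. Each school's choice rule $\Ch_s^a$ is responsive, since it selects the top $\min\{q_s,|A|\}$ students under $\succ_s^a$. The pair $(P,\succ^a)$ is therefore a standard many-to-one school-choice problem with strict priorities and capacities $q_s$.

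The first step invokes Corollary~\ref{cor:svda-equivalence} with $\widehat P=P$. This gives $\SVDA_\rho(P,c;\theta)=\DA(P,\succ^a)$ for every audit order $\rho$. The claim then concerns only the ordinary DA outcome under the fixed priorities $\succ^a$.

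The second step applies the Gale--Shapley theorem in its many-to-one form \citep{GaleShapley1962,Roth1982,AbdulkadirogluSonmez2003}. It states that student-proposing DA under strict priorities and responsive choice produces a matching that is stable and weakly preferred by every student to every other stable matching. For completeness, I would sketch the standard argument in the paper's terms.

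\emph{Individual rationality.} Students apply only to acceptable schools under $P$.

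\emph{No blocking pairs.} Suppose $i$ prefers $s$ to $\mu(i)$. Then $i$ applied to $s$ and was rejected in some round. At that round, $s$ held $q_s$ students ranked above $i$ under $\succ_s^a$. Tentative holdings at $s$ only improve over rounds under $\succ_s^a$, so at termination $s$ is full and every student assigned to $s$ outranks $i$.

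\emph{Student-optimality.} Call $s$ achievable for $i$ if some stable matching assigns $i$ to $s$. The standard induction shows that no student is ever rejected by an achievable school. Suppose, for contradiction, that the first such rejection is of $i$ by $s$. The $q_s$ students held by $s$ at that moment all outrank $i$, and none of them has yet been rejected by an achievable school. Any stable matching assigning $i$ to $s$ would then have to leave some such student $j$ elsewhere. That $j$ prefers $s$ to her assignment and outranks $i$, so $(j,s)$ blocks, which is a contradiction.

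I expect no genuine obstacle here, since the content is the classical theorem. The one point that needs care is confirming that the hypotheses of that theorem hold. Priorities must be strict, which \eqref{eq:priority} guarantees. They must also be independent of the matching process: $a(c,\theta)$ is fixed by $c$ and $\theta$ and does not depend on which checks are performed, or when. This is exactly what path equivalence delivers, so SVDA's adaptive verification introduces no dependence of priorities on the history.
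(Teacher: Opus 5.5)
Your proposal is correct and follows the paper's proof. You use Corollary~\ref{cor:svda-equivalence} with $\widehat P=P$ to identify SVDA with $\DA(P,\succ^{a(c,\theta)})$, and then invoke the standard stability and student-optimality of student-proposing DA; your sketch of the Gale--Shapley argument adds detail the paper omits but follows the same route.
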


\begin{proof}
Under truthful preferences, Corollary~\ref{cor:svda-equivalence} identifies SVDA with student-proposing DA under $\succ^{a(c,\theta)}$. The standard properties of DA give the result.
\end{proof}

\begin{corollary}\label{cor:pref-strategyproof}
Fix $c$ and $\theta$. Truthful preference reporting is a weakly dominant strategy under SVDA.
\end{corollary}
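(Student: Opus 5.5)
With $c$ and $\theta$ held fixed, the plan is to reduce the claim to the classical strategy-proofness of student-proposing DA by way of Corollary~\ref{cor:svda-equivalence}. Fix student $i$, her true preference $P_i$, and an arbitrary profile $\widehat P_{-i}$ of the other students' preference reports. The key observation is that the fully validated effective tiers $a(c,\theta)$, defined in \eqref{eq:effective-tier}, depend only on the tier reports $c$ and the true eligibility $\theta$. They do not depend on any preference report. Hence the priority profile $\succ^{a(c,\theta)}$ from \eqref{eq:priority} is the same fixed object whatever preference $i$ submits.

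Next, for any preference report $P_i'$ that $i$ might submit, Corollary~\ref{cor:svda-equivalence} gives
\[
 \SVDA(P_i',\widehat P_{-i},c;\theta)=\DA\bigl(P_i',\widehat P_{-i},\succ^{a(c,\theta)}\bigr).
\]
This holds whichever audit orders $\rho$ are announced, because the final matching does not depend on $\rho$. This point matters, since the audit orders are fixed after reports are filed and could in principle vary with the submitted profile. I would state explicitly that a change in $\rho$ induced by $i$'s deviation cannot affect her assignment.

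The comparison of $i$'s assignment under $P_i$ and under $P_i'$ therefore becomes a comparison of DA outcomes under one fixed strict priority profile. Strategy-proofness of student-proposing DA for students \citep{DubinsFreedman1981,Roth1982} then gives $\DA_i(P_i,\widehat P_{-i},\succ^{a})\mathrel{R_i}\DA_i(P_i',\widehat P_{-i},\succ^{a})$. Because students care only about their final assignments and bear no cost from verification, this weak preference is exactly weak dominance of truthful preference reporting.

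I expect the main obstacle to be a bookkeeping issue rather than a real difficulty. The argument must rule out any channel through which $i$'s preference report could influence priorities, and there are two candidates: the order in which checks occur, and the set of claims that get checked. Both affect only the verification path, not the effective tiers. Corollary~\ref{cor:svda-equivalence} already shows that the outcome equals DA under the full-verification priorities. So the one thing to state carefully is that these priorities are invariant to $\widehat P_i$.
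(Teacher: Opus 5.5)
Your proposal is correct and matches the paper's proof: use Corollary~\ref{cor:svda-equivalence} to identify SVDA with student-proposing DA under the fixed priority profile $\succ^{a(c,\theta)}$, then invoke DA's strategy-proofness. Your explicit remarks that $a(c,\theta)$ does not depend on preference reports, and that the outcome is invariant to any profile-dependent choice of $\rho$, spell out what the paper's two-line proof leaves implicit.
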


\begin{proof}
For every $\widehat P$, SVDA equals student-proposing DA under the fixed priorities $\succ^{a(c,\theta)}$. It therefore inherits student-side strategy-proofness.
\end{proof}

Exactness does not permit all checks to wait until the end. A tempting shortcut is to honor every claim provisionally, run DA, and check only claims made by provisionally assigned students. This may use fewer checks, but it can produce the wrong matching because a false claim may have already changed the rejection path.

\begin{example}\label{ex:placed-only}
Consider two one-seat schools $a$ and $b$ and three students with preferences
\[
 P_1:a\succ b\succ\emptyseto,
 \qquad
 P_2:a\succ\emptyseto,
 \qquad
 P_3:b\succ a\succ\emptyseto.
\]
At school $a$, baseline priority is $1\mathrel{\pi_a}2\mathrel{\pi_a}3$, and at school $b$ it is $1\mathrel{\pi_b}3\mathrel{\pi_b}2$.
Student $1$ has no positive report at either school. At school $a$, student $2$ reports tier $1$ even though $\theta_{2a}=0$, while student $3$ reports tier $2$ and is genuinely eligible for it, $\theta_{3a}=2$.

Under the placed-only rule, students $1$ and $2$ first apply to $a$, while student $3$ applies to $b$. School $a$ provisionally accepts student $2$ at tier $1$ and rejects student $1$. Student $1$ applies to $b$ and displaces student $3$, who then applies to $a$ and displaces student $2$ with her tier-$2$ claim. The provisional outcome is $3\mapsto a$, $1\mapsto b$, and $2\mapsto\emptyseto$.
Only student $3$ makes a positive claim at her provisional assignment, and it is valid. The rule therefore accepts this matching.

Under full verification, student $2$'s false claim gives her tier $0$ at $a$. School $a$ keeps student $1$ in the first round, and student $3$ is never displaced from $b$. The correct outcome is $1\mapsto a$, $3\mapsto b$, and $2\mapsto\emptyseto$.
A false claim by an ultimately unmatched student has changed the outcome through a rejection chain. Checking only provisional assignees is therefore insufficient.
\end{example}

The example separates accuracy from the number of checks. A rule may use fewer checks only because it ignores a claim that changed an earlier rejection. Meaningful comparisons of verification burden must therefore be among procedures that still recover the full-verification outcome. SVDA checks a claim when it becomes pivotal, whether or not the student is eventually assigned.

SVDA is exact for every audit order, but the order affects the number of checks. When outstanding claims are valid, reported-priority auditing minimizes checks among pivotal-only procedures at each school decision. The next section shows that exact-tier verification makes truthful preference and eligibility reporting a weakly dominant strategy, supplying this validity condition along the recommended path.

\section{Incentives and Truthful-Path Verification}\label{sec:incentives}

A student may change both parts of her report, $\widehat t_i=(\widehat P_i,c_i)$. As usual, \emph{strategy-proofness} means that truthful reporting is a weakly dominant strategy \citep{DubinsFreedman1981,Roth1982}. Here a unilateral deviation may change both preferences and eligibility claims. We do not consider coordinated deviations by groups.

\begin{definition}\label{def:strategyproof-augmented}
Let $\phi(\widehat P,c;\theta)$ denote the final matching induced by a mechanism with verification when the objective eligibility profile is $\theta$, and let $\phi_i$ denote student $i$'s assignment. The mechanism is \emph{strategy-proof} if, for every true profile $(P,\theta)$, every student $i$, every report profile $(\widehat P_{-i},c_{-i})$ of the other students, and every alternative report $(\widetilde P_i,\widetilde c_i)$ by student $i$,
\[
\phi_i\bigl((P_i,\widehat P_{-i}),(\theta_i,c_{-i});\theta\bigr)\mathrel{R_i}\phi_i\bigl((\widetilde P_i,\widehat P_{-i}),(\widetilde c_i,c_{-i});\theta\bigr).
\]
Equivalently, $(P_i,\theta_i)$ is a weakly dominant report for student $i$.
\end{definition}

The weak relation $R_i$ allows truth-telling and a deviation to produce the same assignment.

\subsection{Respecting priority improvements}

We use the familiar \emph{respect for improvements} property of student-proposing DA \citep{BalinskiSonmez1999}; see \citet{Kominers2024} for a general treatment. Our weak version allows priorities to remain unchanged because some reports yield the same effective tier as truth-telling; the proof below follows the standard argument and is included to keep the paper self-contained.

\begin{definition}\label{def:priority-improvement}
Fix a student $i$. A priority profile $\succ'$ is a \emph{weak priority improvement for $i$} over $\succ$ if, at every school $s$,
\begin{enumerate}[label=\textup{(\alph*)}]
    \item the relative priority of every pair of students $j,k\neq i$ is the same under $\succ_s$ and $\succ_s'$, and
    \item whenever $i\succ_s j$, we also have $i\succ_s' j$.
\end{enumerate}
Thus $i$ may move up but never down, while every comparison among other students remains fixed.
\end{definition}

\begin{lemma}\label{lem:respect-improvements}
Fix a preference profile $P$. If $\succ'$ is a weak priority improvement for student $i$ over $\succ$, then $\DA_i(P,\succ')\mathrel{R_i}\DA_i(P,\succ)$.
\end{lemma}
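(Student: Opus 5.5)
The plan is the standard stability argument: take the student-optimal stable matching under $\succ$ and show that a modified version of it is still tractable under $\succ'$. Let $\mu=\DA(P,\succ)$ and $\mu'=\DA(P,\succ')$. Suppose, for contradiction, that $\mu(i)\mathrel{P_i}\mu'(i)$. The aim is to construct a matching that is stable under $\succ'$ and gives some student strictly more than $\mu'$ does. That would contradict the student-optimality of $\mu'$ among $\succ'$-stable matchings \citep{GaleShapley1962}.

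\textbf{Step 1: compare $\mu$ with $\succ'$.} Check which pairs can block $\mu$ under $\succ'$. Individual rationality is unchanged, since preferences are fixed. Consider a pair $(j,s)$ with $j\neq i$ that blocks $\mu$ under $\succ'$. Then $j$ prefers $s$ to $\mu(j)$, and either $s$ has an empty seat or $j\succ'_s k$ for some $k\in\mu(s)$.
\begin{itemize}
\item If $k\neq i$, condition (a) gives $j\succ_s k$.
\item If $k=i$, then $j\succ'_s i$ implies $j\succ_s i$ by the contrapositive of (b).
\end{itemize}
In either case $(j,s)$ already blocked $\mu$ under $\succ$, which is a contradiction. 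So any pair blocking $\mu$ under $\succ'$ involves $i$. Moreover, $\mu$ remains $\succ'$-stable if $i$ does not envy any seat she newly outranks. The improvement only helps $i$ to claim seats, and it never lets others block against her.

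\textbf{Step 2: rule out $\mu(i)\mathrel{P_i}\mu'(i)$.} Here I would use the weak version of the argument, via the lattice and rural-hospital structure. The cleaner route is a direct DA-run comparison. Run DA under $\succ'$ and show by induction on rounds that no school $s$ with $s\mathrel{R_i}\mu(i)$ ever rejects $i$.

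Concretely, suppose $i$ is rejected by some $s\mathrel{R_i}\mu(i)$ under $\succ'$, and look at the first such rejection. Up to that point, every student $j\neq i$ who was rejected under $\succ'$ by a school she weakly prefers to $\mu(j)$ must have been rejected because of students who would also outrank her under $\succ$.

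I expect this step to be the main obstacle, because rejections of other students under $\succ'$ may be caused by $i$ herself, now ranked higher. Such a rejection pushes $j$ down in the run under $\succ'$ relative to the run under $\succ$, so the invariant must be stated carefully. The invariant I would maintain is: every student rejected in the run under $\succ'$ from a school weakly above her $\mu$-assignment is rejected either in favor of students who belong to $\mu(s)$ or in favor of $i$.

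\textbf{Step 3: fall back on the known theorem if needed.} If the direct induction becomes messy, I would instead invoke the known respect-for-improvements theorem \citep{BalinskiSonmez1999,Kominers2024} for strict improvements. A weak improvement under Definition~\ref{def:priority-improvement} is a special case of their improvement notion, since a pair comparison involving $i$ can only flip in her favor. The equality case $\succ'=\succ$ is trivial. This reduces the lemma to citation plus a verification that the definitions match.
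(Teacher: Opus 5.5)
\textbf{There is a genuine gap.} Your Step~1 is correct and is exactly the stability check the paper performs. The proposal, however, never closes the argument.

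The claim Step~2 sets out to prove by induction is false in general. In DA a student is rejected by every school she ranks above her final assignment. So already for the trivial improvement $\succ'=\succ$, which satisfies Definition~\ref{def:priority-improvement}, student $i$ is rejected under $\succ'$ by every $s$ with $s\mathrel{P_i}\mu(i)$ whenever $\mu(i)$ is not her first choice.

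The invariant you propose also fails. Take unit-capacity schools $t,u,w$ with preferences $P_i: t\succ w\succ\emptyseto$, $P_j: t\succ u\succ\emptyseto$, and $P_k: u\succ\emptyseto$. Let priorities be $j\succ_t i\succ_t k$ and $j\succ_u k$, and let $\succ'$ differ only in $i\succ'_t j$. Then $\mu$ assigns $i\mapsto w$, $j\mapsto t$, $k\mapsto u$. Under $\succ'$, student $k$ is rejected by $u=\mu(k)$ in favor of $j$, where $j\notin\mu(u)$ and $j\neq i$. The displacement started by $i$ propagates through other students, which is why a round-by-round comparison is hard to control. Step~3 then cites the result being proved. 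As written, the proposal therefore contains no proof.

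\textbf{The missing idea} is to remove the blocking pairs involving $i$ by changing her report, not the matching. Let $s=\mu(i)$; the case $\mu(i)=\emptyseto$ is immediate from individual rationality. Let $P_i^s$ make $s$ her only acceptable school.
\begin{enumerate}
\item Your Step~1 shows that $\mu$ is stable under $((P_i^s,P_{-i}),\succ')$. Pairs not involving $i$ do not block, and $i$ sits at her unique acceptable school.
\item Student-optimality then gives $\DA_i((P_i^s,P_{-i}),\succ')=s$.
\item Strategy-proofness of DA under the fixed priorities $\succ'$ yields $\DA_i(P,\succ')\mathrel{R_i}s=\mu(i)$.
\end{enumerate}
This is the paper's proof. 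The contradiction comes from strategy-proofness applied to the truncated report. It does not come from a $\succ'$-stable matching under the original $P$ that beats $\mu'$, which your opening paragraph looks for and which $\mu$ itself need not be.
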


\begin{proof}
Let $\mu=\DA(P,\succ)$ and $\mu'=\DA(P,\succ')$. If $\mu(i)=\emptyseto$, the conclusion follows from the individual rationality of DA. Suppose instead that $\mu(i)=s$ for some school $s$ and, toward a contradiction, that $s\mathrel{P_i}\mu'(i)$.

Construct a report $P_i^s$ that makes $s$ student $i$'s only acceptable school, leaving all other reports unchanged.

We claim that $\mu$ remains stable under $((P_i^s,P_{-i}),\succ')$. Student $i$ is assigned to her only acceptable school and cannot block. Suppose another student $j$ blocks with school $t$. If $t$ has an empty seat, the same pair blocks $\mu$ under $(P,\succ)$. If $j$ competes with an assigned student other than $i$, that priority comparison is unchanged, so the same pair blocks $\mu$ under $(P,\succ)$. If $j$ competes with $i$, then $t=s$. Because $i$ only moved up, $j\succ_s' i$ implies $j\succ_s i$, so $(j,s)$ again blocks the original matching. Both cases contradict the stability of $\mu$. Thus $\mu$ is stable under the modified problem.

Because DA is student-optimal among stable matchings and $\mu$ assigns $i$ to her only acceptable school $s$, DA under $((P_i^s,P_{-i}),\succ')$ also assigns her to $s$.

Under $P_i$ and $\succ'$, however, we assumed that DA gives $i$ the worse outcome $\mu'(i)$. Reporting $P_i^s$ would then profitably obtain $s$, contradicting strategy-proofness of DA under fixed priorities $\succ'$. 
\end{proof}

\subsection{Truthful tier reporting improves one's priorities}

\begin{proposition}\label{prop:tier-improvement}
Fix the true eligibility profile $\theta$, the reports $c_{-i}$ of the other students, and an arbitrary tier report $\widetilde c_i$ by student $i$. The fully validated priority profile generated by the truthful tier report $c_i=\theta_i$ is a weak priority improvement for student $i$ over the fully validated priority profile generated by $\widetilde c_i$.
\end{proposition}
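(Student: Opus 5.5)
The plan is to reduce everything to a comparison of effective tiers, school by school, and then read off the two clauses of Definition~\ref{def:priority-improvement} from the priority rule \eqref{eq:priority}. Write $a=a(\theta_i,c_{-i};\theta)$ for the fully validated tiers under the truthful report, and $\widetilde a=a(\widetilde c_i,c_{-i};\theta)$ for those under the deviation. By \eqref{eq:effective-tier}, each $a_{js}$ depends only on $c_{js}$ and $\theta_{js}$. Since student $i$'s report does not enter the tiers of other students, $a_{js}=\widetilde a_{js}$ for every $j\neq i$ and every school $s$.

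\textbf{Clause (a).} Fix a school $s$ and students $j,k\neq i$. Under \eqref{eq:priority}, their relative priority is determined by $(a_{js},a_{ks})$ together with the fixed baseline order $\pi_s$. These inputs coincide under $a$ and $\widetilde a$, so the comparison is the same in both profiles.

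\textbf{Key inequality.} The central step is to show $a_{is}\ge\widetilde a_{is}$ at every school $s$.
\begin{itemize}
\item Under truthful reporting, $a_{is}=\theta_{is}$. If $\theta_{is}>0$, the report $c_{is}=\theta_{is}$ is positive and valid, so it receives tier $\theta_{is}$. If $\theta_{is}=0$, the report is zero and the tier is $0=\theta_{is}$.
\item Under the deviation, $\widetilde a_{is}=\widetilde c_{is}\1\{\widetilde c_{is}>0,\ \widetilde c_{is}=\theta_{is}\}$, which is either $0$ or $\theta_{is}$.
\end{itemize}
In both cases $\widetilde a_{is}\le\theta_{is}=a_{is}$. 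This is exactly where exact-tier verification matters. A false claim cannot earn a higher tier, because a claim is honored only if it equals the truth, and otherwise it falls to $0\le\theta_{is}$.

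\textbf{Clause (b).} Suppose $i\succ_s^{\widetilde a} j$, and recall $a_{js}=\widetilde a_{js}$.
\begin{itemize}
\item If $\widetilde a_{is}>\widetilde a_{js}$, then $a_{is}\ge\widetilde a_{is}>a_{js}$, so $i\succ_s^{a}j$.
\item If $\widetilde a_{is}=\widetilde a_{js}$ and $i\mathrel{\pi_s}j$, then either $a_{is}>a_{js}$, or $a_{is}=a_{js}$ with the same baseline tie-break. In both sub-cases $i\succ_s^{a}j$.
\end{itemize}

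The argument is routine; the only substantive point is the tier inequality $a_{is}\ge\widetilde a_{is}$. I would state it carefully for both the $\theta_{is}=0$ and the $\theta_{is}>0$ cases, noting that under-reporting a true positive tier also yields $0$ and not an intermediate tier.
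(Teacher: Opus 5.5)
Your proof is correct and follows the paper's argument. Both reduce the claim to the school-by-school inequality $\widetilde a_{is}\le\theta_{is}=a_{is}$, since a deviation earns either $\theta_{is}$ or $0$, together with the fact that the other students' effective tiers do not depend on $i$'s report; you additionally spell out clauses (a) and (b) of Definition~\ref{def:priority-improvement} from the lexicographic rule \eqref{eq:priority}, which the paper leaves implicit.
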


\begin{proof}
Fix school $s$. If $\theta_{is}=0$, student $i$ receives tier $0$ both when truthful and after any false positive claim. Truth-telling cannot lower her priority.

If $\theta_{is}>0$, truth-telling gives $a_{is}=\theta_{is}$. Any other claim gives either the same tier, when it matches $\theta_{is}$, or tier $0$. Truth-telling therefore weakly raises $i$'s tier at every school and leaves all comparisons among other students unchanged. It is a weak priority improvement.
\end{proof}

\begin{theorem}\label{thm:strategyproof-augmented}
SVDA is strategy-proof on the augmented report domain of preferences and eligibility tiers. Equivalently, truthful reporting of both components is a weakly dominant strategy. Formally, for every student $i$, true profile $(P,\theta)$, reports $(\widehat P_{-i},c_{-i})$ of the other students, and deviation $(\widetilde P_i,\widetilde c_i)$,
\[
\SVDA_i\bigl((P_i,\widehat P_{-i}),(\theta_i,c_{-i});\theta\bigr)\mathrel{R_i}\SVDA_i\bigl((\widetilde P_i,\widehat P_{-i}),(\widetilde c_i,c_{-i});\theta\bigr).
\]
\end{theorem}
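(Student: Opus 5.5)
The plan is to split any deviation $(\widetilde P_i,\widetilde c_i)$ into two steps, first correcting the preference report and then the tier report. Corollary~\ref{cor:svda-equivalence} lets us replace SVDA everywhere by student-proposing DA under fully validated priorities. For any report profile,
\[
\SVDA\bigl(\widehat P,c;\theta\bigr)=\DA\bigl(\widehat P,\succ^{a(c,\theta)}\bigr),
\]
and this holds for every audit order $\rho$. The audit orders may depend on the submitted tiers, but they affect only the number of checks, not the outcome. Write $\succ^{T}$ for the validated priority profile generated by $(\theta_i,c_{-i})$ and $\succ^{D}$ for the one generated by $(\widetilde c_i,c_{-i})$. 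The claim then reads
\[
\DA_i\bigl((P_i,\widehat P_{-i}),\succ^{T}\bigr)\mathrel{R_i}\DA_i\bigl((\widetilde P_i,\widehat P_{-i}),\succ^{D}\bigr).
\]

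First, fix the deviating tier report, so that the priorities $\succ^{D}$ are fixed. Corollary~\ref{cor:pref-strategyproof}, which is just strategy-proofness of DA under fixed priorities, gives
\[
\DA_i\bigl((P_i,\widehat P_{-i}),\succ^{D}\bigr)\mathrel{R_i}\DA_i\bigl((\widetilde P_i,\widehat P_{-i}),\succ^{D}\bigr).
\]
This removes the preference misreport. Second, fix the truthful preference profile $(P_i,\widehat P_{-i})$ and compare $\succ^{T}$ with $\succ^{D}$. By Proposition~\ref{prop:tier-improvement}, $\succ^{T}$ is a weak priority improvement for $i$ over $\succ^{D}$. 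This holds because the effective tiers $a_{js}$ of the other students $j\neq i$ depend only on $c_{-i}$ and $\theta$, so every comparison among others is unchanged, while $i$'s tier weakly rises at every school. Lemma~\ref{lem:respect-improvements}, applied with preference profile $(P_i,\widehat P_{-i})$, then yields
\[
\DA_i\bigl((P_i,\widehat P_{-i}),\succ^{T}\bigr)\mathrel{R_i}\DA_i\bigl((P_i,\widehat P_{-i}),\succ^{D}\bigr).
\]
Chaining the two weak comparisons by transitivity of $R_i$ gives the theorem.

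The main point needing care is the second step. Lemma~\ref{lem:respect-improvements} is evaluated with the true preference $P_i$ for student $i$ but with arbitrary reports $\widehat P_{-i}$ for the others. This is fine, since that lemma holds for any fixed preference profile and its proof only uses stability and strategy-proofness of DA at that profile. A second point to check is that Definition~\ref{def:priority-improvement}(b) follows from a weak increase in $i$'s tier. If $i\succ_s^{D} j$, then either $i$'s tier exceeds $j$'s, or the tiers are equal and $i\mathrel{\pi_s}j$. Raising $i$'s tier preserves both cases.
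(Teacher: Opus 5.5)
Your proposal is correct and follows essentially the same route as the paper's proof. You remove the preference misreport using strategy-proofness of SVDA under the fixed deviating tier report, then remove the tier misreport using Proposition~\ref{prop:tier-improvement} and Lemma~\ref{lem:respect-improvements} via the SVDA--DA equivalence of Corollary~\ref{cor:svda-equivalence}, and chain the two by transitivity of $R_i$; your extra checks on audit-order independence and on applying the lemma at the profile $(P_i,\widehat P_{-i})$ are sound.
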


\begin{proof}
Fix student $i$, the true eligibility profile $\theta$, and the other students' reports $(\widehat P_{-i},c_{-i})$. Consider an arbitrary alternative multidimensional report $(\widetilde P_i,\widetilde c_i)$.

First hold $\widetilde c_i$ fixed and report preferences truthfully. Corollary~\ref{cor:pref-strategyproof} gives
\begin{equation}\label{eq:sp-step1}
\SVDA_i\bigl((P_i,\widehat P_{-i}),(\widetilde c_i,c_{-i});\theta\bigr)\mathrel{R_i}\SVDA_i\bigl((\widetilde P_i,\widehat P_{-i}),(\widetilde c_i,c_{-i});\theta\bigr).
\end{equation}

Next hold $P_i$ fixed and replace $\widetilde c_i$ with $\theta_i$. Proposition~\ref{prop:tier-improvement} makes this a weak priority improvement for $i$. By Corollary~\ref{cor:svda-equivalence}, each SVDA outcome equals the corresponding full-verification DA outcome. Lemma~\ref{lem:respect-improvements} therefore gives
\begin{equation}\label{eq:sp-step2}
\SVDA_i\bigl((P_i,\widehat P_{-i}),(\theta_i,c_{-i});\theta\bigr)\mathrel{R_i}\SVDA_i\bigl((P_i,\widehat P_{-i}),(\widetilde c_i,c_{-i});\theta\bigr).
\end{equation}

Transitivity of $R_i$ combines \eqref{eq:sp-step2} and \eqref{eq:sp-step1}.
\end{proof}

\begin{corollary}\label{cor:truthful-audit-efficiency}
Suppose SVDA uses the reported-priority audit rule. Along the truthful-reporting path, every filed positive tier report is valid. Therefore, at every fixed school and candidate set reached by the mechanism, reported-priority auditing minimizes the number of verifications among pivotal-only local procedures and performs at most $q_s$ checks.
\end{corollary}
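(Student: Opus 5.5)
The plan is to reduce the corollary to the two local results, Proposition~\ref{prop:reported-priority-optimal} and Corollary~\ref{cor:capacity-check-bound}. Their only hypothesis beyond fixing a school, a candidate set, and an information state is that every unresolved positive report is valid. So the main step is to show that this validity condition holds at every local decision SVDA reaches along the truthful-reporting path.

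First, I will make precise what "the truthful-reporting path" means here. Every student reports $c_i=\theta_i$, and preferences are arbitrary, truthful by Corollary~\ref{cor:pref-strategyproof} and Theorem~\ref{thm:strategyproof-augmented}. For any filed positive report $c_{is}=r>0$ we then have $\theta_{is}=r$. Hence $v_{is}(c,\theta)=\1\{c_{is}>0,\ c_{is}=\theta_{is}\}=1$, and every positive claim is valid, whether already checked or still pending.

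Second, I will fix an arbitrary round $t$ and school $s$, with candidate set $A_s^t$ and the information state inherited from earlier checks at $s$. The unresolved set $U_s(A_s^t)$ consists of filed positive reports, all valid by the first step, and the previously resolved tiers $\bar a_{js}$ are correct. The hypothesis of Proposition~\ref{prop:reported-priority-optimal} therefore holds at this decision. Since SVDA's step (iii) is exactly the reported-priority audit rule applied to $(s,A_s^t)$ with that state, the proposition gives minimality among pivotal-only procedures. Corollary~\ref{cor:capacity-check-bound} gives the bound $V_s^{\mathrm{RP}}(A_s^t)\le\min\{q_s,|U|\}\le q_s$.

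The only possible obstacle is a subtle one: the claim is about candidate sets "reached by the mechanism", and these sets depend on earlier local choices. If some other pivotal-only procedure were used in an earlier round, the later candidate sets might differ. I would handle this by stressing that the comparison is local, with the candidate set and information state held fixed. By Corollary~\ref{cor:svda-equivalence}, the sequence of candidate sets is the full-verification DA path regardless of the audit orders. So the decisions reached are well defined independently of the procedure, and the local comparison applies at each of them. The information state may differ across procedures, since earlier checks differ, but the local results hold for every information state, so this causes no difficulty.
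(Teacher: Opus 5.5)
Your proposal is correct and matches the paper's proof: under truthful tier reporting every filed positive claim satisfies $c_{is}=\theta_{is}>0$ and is therefore valid, so Proposition~\ref{prop:reported-priority-optimal} and Corollary~\ref{cor:capacity-check-bound} apply at each local decision. Your extra remark, that candidate sets do not depend on the audit order via Corollary~\ref{cor:svda-equivalence}, is not needed because the comparison is local by construction; it is a harmless clarification that the paper leaves implicit.
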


\begin{proof}
Under truthful reporting, every positive claim satisfies $c_{is}=\theta_{is}>0$. Proposition~\ref{prop:reported-priority-optimal} and Corollary~\ref{cor:capacity-check-bound} therefore apply at each local decision.
\end{proof}

This corollary links incentives to verification. The audit-order result requires valid claims, while strategy-proofness recommends truthful reports and therefore supplies that condition.

\begin{example}
\label{ex:lie-hurts}
Consider one school $a$ with one seat and eligibility tiers $\mathcal L_a=\{0,1,2\}$. There are two students, both of whom prefer $a$ to remaining unmatched, and baseline priority is $1\mathrel{\pi_a}2$.
Student $1$ has no special eligibility, $\theta_{1a}=0$, while student $2$ is genuinely eligible for tier $1$, so $\theta_{2a}=1$.

If student $2$ reports truthfully, $c_{2a}=1$, verification validates the report. Student $2$ receives effective tier $1$, student $1$ remains at tier $0$, and student $2$ obtains the seat.

Suppose instead that student $2$ reports $c_{2a}=2$. Under exact-tier verification the report is invalid, so student $2$ receives effective tier $0$. Both students are then in tier $0$, and student $1$ obtains the seat by baseline priority. The same outcome results if student $2$ suppresses the entitlement by reporting $c_{2a}=0$. Holding student $1$'s report fixed at tier $0$, student $2$ receives $a$ under $c_{2a}=1$ and $\emptyseto$ under either $c_{2a}=2$ or $c_{2a}=0$.
Because $a\,P_2\,\emptyseto$, both misreports are strictly worse than truthful reporting for student $2$.
\end{example}

Example~\ref{ex:lie-hurts} shows that hiding or misstating a true positive tier can be strictly harmful. Strategy-proofness does not make every false report strictly worse: if a tier cannot affect the allocation, a deviation may leave the outcome unchanged.

\begin{corollary}\label{cor:true-tier-da}
If all students report both their preferences and their eligibility tiers truthfully, then $\SVDA(P,\theta;\theta)=\DA(P,\succ^{\theta})$. Thus truthful SVDA implements the student-optimal stable matching under the true ordered eligibility tiers while verifying only reports that become locally relevant along the DA path.
\end{corollary}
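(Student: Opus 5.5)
The plan is to read the result off Corollary~\ref{cor:svda-equivalence} once the truthful tier reports are plugged into the effective-tier formula \eqref{eq:effective-tier}. Under truthful reporting $c=\theta$, every student--school pair $(i,s)$ falls into one of two cases. If $\theta_{is}=0$, then $c_{is}=0$ and $a_{is}(\theta,\theta)=0=\theta_{is}$. If $\theta_{is}>0$, then $c_{is}=\theta_{is}>0$, so $v_{is}(\theta,\theta)=1$ and $a_{is}(\theta,\theta)=\theta_{is}$. Hence $a(\theta,\theta)=\theta$ componentwise, and by \eqref{eq:priority} the fully validated priority profile $\succ^{a(\theta,\theta)}$ is exactly $\succ^{\theta}$, the order that ranks students by true tier and breaks ties by baseline priority.

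Next I would apply Corollary~\ref{cor:svda-equivalence} with $\widehat P=P$ and $c=\theta$. This gives $\SVDA(P,\theta;\theta)=\DA(P,\succ^{a(\theta,\theta)})=\DA(P,\succ^{\theta})$ for every audit order, which justifies dropping $\rho$. Student optimality then follows either from Corollary~\ref{cor:stability} (taking $c=\theta$) or directly from the standard properties of student-proposing DA under the fixed priorities $\succ^\theta$.

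The clause ``verifying only reports that become locally relevant'' is interpretive, and I would support it from the construction. At each round, SVDA runs local selective verification (Definition~\ref{def:local-verification}), which checks a claim only while it is pivotal at the current candidate set $A_s^t$. Corollary~\ref{cor:svda-equivalence} makes these candidate sets the same as those on the full-verification DA path. So every check performed is of a claim that is pivotal at some candidate set that actually arises along the DA path. Because all filed positive claims are valid under truthful reporting, Corollary~\ref{cor:truthful-audit-efficiency} adds the local minimality and the bound $q_s$ whenever the reported-priority rule is used.

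No step is a real obstacle. The one point needing care is the case $\theta_{is}=0$: there $c_{is}=0$ is not verified at all, yet it must still be confirmed that it yields effective tier $0$, which matches the true tier.
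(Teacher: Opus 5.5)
Your proposal is correct and follows the paper's proof: set $c=\theta$ so that $a(\theta,\theta)=\theta$, then apply Corollary~\ref{cor:svda-equivalence}. Your case check on $\theta_{is}=0$ versus $\theta_{is}>0$ and your justification of the ``locally relevant'' clause via path equivalence spell out points the paper's one-line proof leaves implicit.
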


\begin{proof}
Under truthful tier reporting, $c=\theta$, so $a(c,\theta)=\theta$. The result follows immediately from Corollary~\ref{cor:svda-equivalence}.
\end{proof}

Truth-telling need not be the unique weakly dominant report. If a student only adds false positive claims where $\theta_{is}=0$, exact-tier verification produces the same effective tiers as truth-telling. Corollary~\ref{cor:svda-equivalence} then gives the same assignment. With no private reporting or verification cost, these reports are also weakly dominant.

\section{Computational Evaluation}\label{sec:simulation}

The theory identifies when unresolved reports can affect a current school choice and gives an exact stopping rule for path-preserving implementation. The remaining operational question is quantitative: how much verification does this distinction save in markets with many schools, applications, and school-specific priority claims? We answer that question with Monte Carlo experiments. Because Corollary~\ref{cor:svda-equivalence} makes the SVDA path and outcome identical to those under full verification, the simulations do not compare assignments, stability, or student welfare. They measure verification burden, the incremental value of the audit order, and the risk of delaying verification until provisional assignments have been computed.

\subsection{Simulation design}\label{sec:simulation-design}
The baseline market has 1,000 students and 50 schools. Every school has 20 seats, so aggregate capacity equals the number of students. Students and schools are independently located on the unit square. Each school receives an independent standard-normal quality draw. Student $i$'s utility from school $s$ is $u_{is}=Q_s-3d_{is}+\varepsilon_{is}$, where $d_{is}$ is the Euclidean distance between their locations and the shocks $\varepsilon_{is}$ are independent type-I extreme-value draws. Each student ranks her ten highest-utility schools as acceptable; all other schools are below the outside option. Continuous shocks make preferences strict. This spatial random-utility structure captures the central roles of school quality and travel distance in observed school preferences \citep{BurgessGreavesVignolesWilson2015,FackGrenetHe2019} and follows the use of computational models to evaluate school-choice operations \citep{Shi2015}.
A common uniform lottery generates a strict baseline order, which is used at every school. Common lotteries are a familiar design in centralized school choice and feature prominently in the analysis of random tie-breaking in New York City \citep{AbdulkadirogluPathakRoth2009}. Each school has the three ordered tiers $\mathcal L_s=\{0,1,2\}$. The 25 percent of students living closest to the school form its positive-eligibility pool. Conditional on belonging to this pool, a random 20 percent receive tier 2 and the remainder receive tier 1; students outside the pool receive tier 0. The resulting school-level shares are therefore approximately 5 percent in tier 2, 20 percent in tier 1, and 75 percent in tier 0. Tier 2 can be interpreted as an additional school-specific entitlement, such as sibling or statutory priority, among otherwise locally eligible students. The design is deliberately stylized rather than calibrated to one city.

The baseline uses truthful tier reports, so $c_{is}=\theta_{is}$ for every submitted positive claim. Students submit claims only at acceptable schools. For unlisted schools, we normalize both $c_{is}$ and $\theta_{is}$ to zero because those student--school pairs can never enter a school choice. Since distance affects both preferences and eligibility, eligible students are more likely to list the schools at which they receive priority. Consequently, positive claims account for about 42 percent of submitted student--school pairs even though only 25 percent of all students lie in a school's positive-eligibility pool.

This data-generating process is a special case of the theoretical model. Eligibility is school specific; every positive report has exactly the two outcomes in Equation~\eqref{eq:effective-tier}; effective tier is ordered before the public baseline lottery as in Equation~\eqref{eq:priority}; schools use responsive choice with hard capacities; and verified results remain available in later DA rounds.

We compare five exact policies. \emph{Blanket verification} checks every submitted positive claim before DA begins. \emph{On-path full verification} checks every positive claim the first time its claimant applies to the relevant school. The latter is also exact: claims at schools a student never reaches cannot affect any choice on her realized DA path. It therefore separates the benefit of postponing verification until a report enters a candidate set from the additional benefit of testing pivotality. For comparison we also report \emph{placed-only verification}, which honors every claim during DA and then checks only the positive claims made by students at their provisional final assignments. This rule is not exact in general (Example~\ref{ex:placed-only}); we include it to show what exactness costs in checks.

The remaining three policies run SVDA exactly as in Definition~\ref{def:svda}. At every school decision, previously held students and new applicants form the candidate set; the procedure identifies the currently pivotal unresolved claims using Proposition~\ref{prop:pivotality-general}, verifies one of them, retains the result, and recomputes pivotality. The policies differ only in their fixed public audit orders:
\begin{enumerate}[label=\textup{(\roman*)}]
    \item \emph{reported priority}: higher reported tiers first, with baseline priority breaking ties, as in Definition~\ref{def:reported-priority-audit};
    \item \emph{baseline priority}: baseline priority alone; and
    \item \emph{random order}: an independent strict random order at each school.
\end{enumerate}
All three orders verify only reports that are currently pivotal. The theory implies that they produce the same choices but may require different numbers of checks.

The baseline contains 300 independently generated markets. Within each market, all policies use the same preferences, capacities, tiers, and lottery. Comparative-static cells use independent markets and vary only the named parameter. The complete design contains 2,770 generated markets, excluding the 250 markets of the placed-only experiment in Subsection~\ref{sec:simulation-invalid}. All five policies are evaluated on every market: the three SVDA audit orders generate 8,310 market--audit-order runs, while the blanket and on-path benchmarks are computed once per market and are therefore common to the three SVDA runs of that market. The master seed is 20260828, and all replication-level data accompany the manuscript.

For each market and audit order, the program compares SVDA with DA under the fully validated effective tiers. It asserts equality of applications and tentative holdings in every round; these equalities also imply identical rejections. It separately checks the final assignment. The program terminates with an error if any equality fails, and no such error occurred in the reported experiments. As an independent implementation check, we compared the count-based pivotality test with direct enumeration of all completions in 2,000 randomly generated local problems; all comparisons agreed. Unless stated otherwise, the reported confidence intervals are normal-approximation 95 percent intervals based on independent market-level replications. These computational checks confirm that the code implements the formal results; they are not substitutes for the proofs.

\subsection{Baseline verification burden}\label{sec:simulation-baseline}

Table~\ref{tab:simulation-baseline} separates two sources of savings. Merely waiting until a positive claim reaches a school reduces the mean number of checks from 4,214.0 to 1,576.6, a reduction of 62.6 percent. Applying the pivotality test produces a further substantial reduction. Reported-priority SVDA performs 999.8 checks on average, approximately one check per student. It reduces verification by 76.3 percent relative to blanket verification and by 36.4 percent relative to on-path full verification.

\begin{table}[H]
\centering
\caption{Verification burden in the baseline market}
\label{tab:simulation-baseline}
\small
\begin{tabular}{lrrrr}
\toprule
Policy & Mean checks & 95\% CI & vs. blanket & vs. on-path \\
\midrule
Blanket verification       & 4,214.0 & [4,200.4, 4,227.5] & ---    & --- \\
On-path full verification  & 1,576.6 & [1,561.9, 1,591.3] & 62.6\% & --- \\
SVDA: random order         & 1,067.3 & [1,058.8, 1,075.8] & 74.7\% & 32.2\% \\
SVDA: baseline priority    & 1,020.5 & [1,012.2, 1,028.8] & 75.8\% & 35.1\% \\
SVDA: reported priority    &   999.8 & [  991.5, 1,008.2] & 76.3\% & 36.4\% \\
\midrule
Placed-only (not exact)    &   802.5 & [  799.5,   805.5] & 81.0\% & 48.8\% \\
\bottomrule
\end{tabular}
\begin{minipage}{0.96\textwidth}
\footnotesize\emph{Notes.} Results are means across 300 independently generated markets. A check resolves one student--school report and has unit cost, as in the model. Savings are computed market by market as one minus the ratio of a policy's checks to the relevant benchmark. Every SVDA run reproduces the fully verified DA path. Placed-only verification recovers the correct matching in these truthful markets but can fail when claims are false (Subsection~\ref{sec:simulation-invalid}).
\end{minipage}
\end{table}

The audit order generates an additional, more modest gain. In paired comparisons on the same markets, reported-priority auditing uses 20.7 fewer checks than baseline-priority auditing, with a 95 percent confidence interval of [19.8, 21.5], and 67.5 fewer than random auditing, with an interval of [65.7, 69.3]. These are reductions of 2.0 and 6.3 percent relative to the respective SVDA alternatives. Exactness does not depend on the order; the differences measure only the information saved by resolving the strongest currently consequential claim first.

Placed-only verification uses fewer checks still: 197.3 fewer than reported-priority SVDA on average, with a 95 percent interval of [190.7, 203.9]. In these truthful markets it also happens to produce the correct matching, because every honored claim is valid. The saving is the price of exactness, not evidence that placed-only checking is safe: it comes from never inspecting claims that shaped the rejection chain without being attached to a final assignment. Subsection~\ref{sec:simulation-invalid} shows that once such a claim is false, placed-only verification can accept a wrong matching without observing any failed check.

Across the baseline markets, 26.3 percent of submitted positive reports become pivotal at some point under reported-priority SVDA. This share is not the number ultimately checked: after another report is resolved, a previously pivotal claim may cease to be pivotal and remain unresolved. The pattern illustrates why pivotality is a statement about current decision sensitivity, rather than a claim that every pivotal report must eventually be checked. It also illustrates the adaptive logic of Proposition~\ref{prop:reported-priority-optimal}; by Lemma~\ref{lem:persistence}, a claim that is already nonpivotal cannot become pivotal merely because another claim in the same candidate set is resolved.

The capacity bound also appears exactly at its stated scope. Under truthful reports and reported-priority auditing, no local school-choice invocation uses more than the school's 20 seats in checks. Some invocations attain the bound. Cumulative verification at a school may nevertheless exceed 20 because later applications create new candidate sets. The computation therefore supports the local interpretation of Corollary~\ref{cor:capacity-check-bound}; it does not turn that result into a bound on total verification over the full DA path.

\subsection{Comparative statics}\label{sec:simulation-comparative}

Figure~\ref{fig:simulation-comparative} varies one feature at a time and reports the savings of reported-priority SVDA relative to on-path full verification. Raising the weight on common school quality from 0.5 to 2 makes applications more concentrated at attractive schools and increases the savings rate from 23.4 to 55.1 percent. Increasing the school-level positive-eligibility pool from 10 to 50 percent raises the savings rate from 23.2 to 49.3 percent. In both experiments, on-path full verification expands more rapidly than the set of reports that can straddle a cutoff.

Capacity affects the numerator and denominator of the savings rate differently. Mean SVDA checks fall from 1,228.4 when aggregate capacity is 80 percent of the number of students to 638.5 when capacity is 120 percent. Relative savings are nonmonotone: 45.7, 36.3, and 39.4 percent at capacity ratios 0.8, 1.0, and 1.2. Finally, holding average school capacity at 20 and list length at ten, the savings rate rises from 30.0 percent with 500 students to 41.2 percent with 2,500 students. These patterns describe the specified data-generating process. They are not comparative-static theorems, and the market-size exercise is not an asymptotic result.

\begin{figure}[t]
\centering
\includegraphics[width=0.96\textwidth]{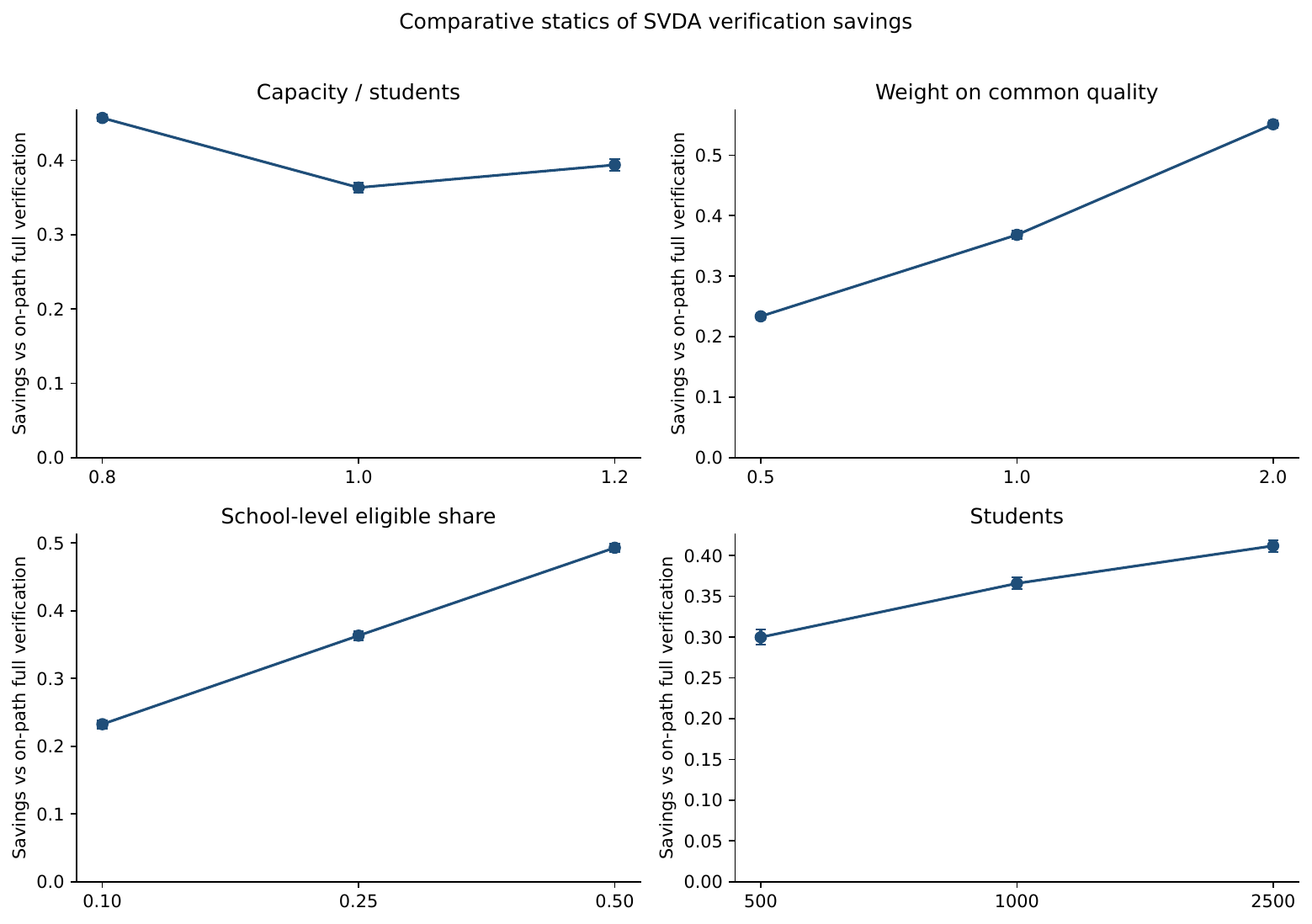}
\caption{Comparative statics of reported-priority SVDA}
\label{fig:simulation-comparative}
\end{figure}

\subsection{Invalid reports and placed-only verification}\label{sec:simulation-invalid}

The baseline deliberately corresponds to the truthful path studied in Corollary~\ref{cor:truthful-audit-efficiency}. To examine robustness outside that condition, we keep the report-generating process fixed and randomly designate 0, 5, 15, or 30 percent of submitted positive reports as invalid by setting the corresponding true tier to zero. The reported tier remains unchanged, and exact-tier verification therefore assigns effective tier zero to the invalid claim, exactly as in Equation~\eqref{eq:effective-tier}.

SVDA remains path-equivalent to fully verified DA in every invalid-report replication, as Corollary~\ref{cor:svda-equivalence} requires. In this separate 200-market experiment, mean reported-priority checks rise from 1,004.1 with no invalid reports to 1,142.4 when 30 percent are invalid. Savings relative to on-path full verification fall from 36.1 to 28.6 percent. Reported-priority auditing continues to use fewer checks on average than baseline-priority and random auditing in these simulations. This comparison is descriptive only: Proposition~\ref{prop:reported-priority-optimal} assumes that every unresolved claim is valid, so these averages neither extend its domain nor establish dominance for every validation profile.

\begin{figure}[t]
\centering
\includegraphics[width=0.88\textwidth]{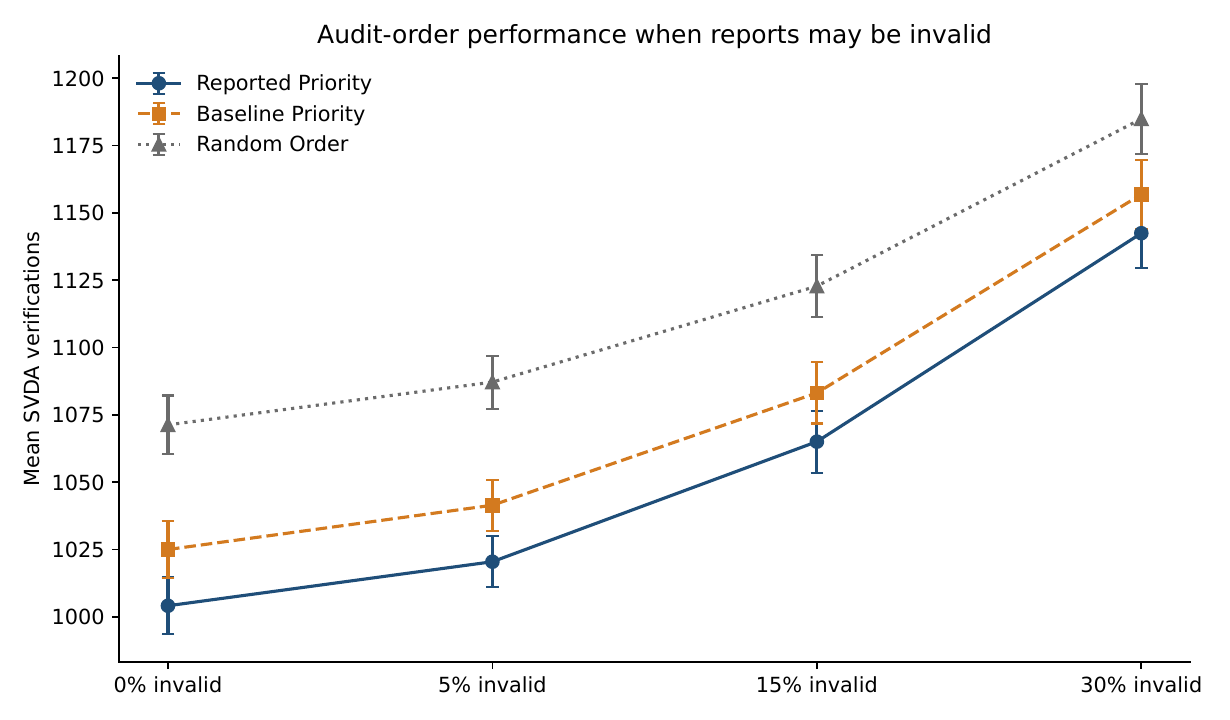}
\caption{Audit-order performance when positive reports may be invalid}
\label{fig:simulation-invalid}
\end{figure}

The final experiment quantifies the rejection-chain concern in Example~\ref{ex:placed-only}. We begin with a truthful market and run reported-priority SVDA to identify positive claims that become pivotal somewhere on the DA path but are not made at the claimant's provisional final assignment. In each of 250 independently generated markets, we sample 50 such claims. We then make each sampled claim false, one at a time, while holding its submitted tier fixed. Because the false claim is not attached to the claimant's provisional assignment and every other report remains truthful, a policy that checks only provisional assignees observes no failed check. We compare the report-honoring provisional assignment accepted by this policy with the assignment under fully validated effective tiers.

Among 12,500 sampled off-assignment pivotal claims, 37 change the final matching, a rate of 0.296 percent with a market-clustered 95 percent confidence interval of [0.202, 0.390] percent. At least one harmful claim appears among the 50 sampled candidates in 14.0 percent of markets, with a 95 percent interval of [9.7, 18.3] percent. Conditional on changing the outcome, a false claim changes the assignments of 4.2 students on average. The experiment is intentionally conditional on selecting an off-assignment claim that was pivotal on the truthful path. It does not estimate the population frequency of false reporting, the unconditional failure probability of placed-only verification, or the welfare loss from a failure. It isolates the mechanism identified by the example: a claim can be absent from the final assignment and still have changed the rejection chain.

\begin{figure}[t]
\centering
\includegraphics[width=0.82\textwidth]{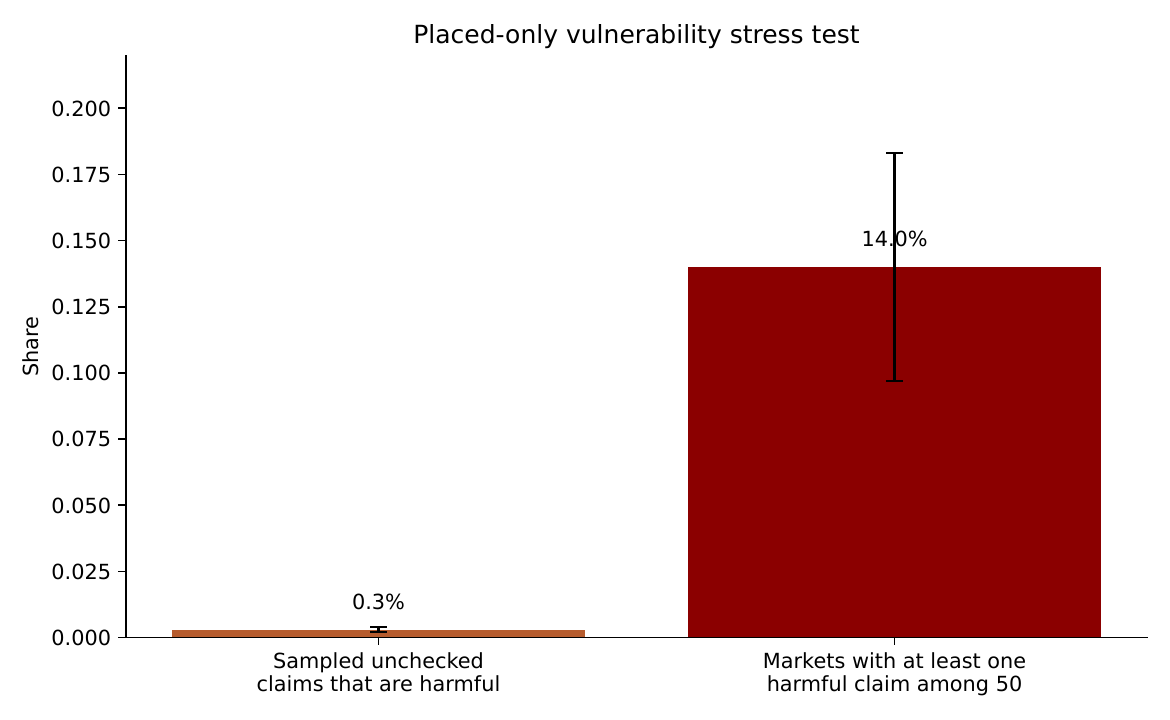}
\caption{Placed-only vulnerability to off-assignment pivotal claims}
\label{fig:simulation-placed-only}
\end{figure}

\subsection{Interpretation}\label{sec:simulation-interpretation}

Three conclusions emerge. First, both verification timing and pivotality matter. In the baseline, restricting checks to claims that reach a school removes 62.6 percent of blanket checks, and the pivotality test removes more than one-third of the checks that remain. Second, the reported-priority order provides an additional reduction under truthful reports and continues to perform well on average in the invalid-report experiments, although the latter observation lies outside the ordering theorem. Third, a placed-only rule can appear economical precisely because it never inspects an off-assignment claim that has already changed a rejection chain. SVDA reduces administrative work while preserving the fully verified path.
\newpage
\section{Managerial Implications, Extensions, and Conclusion}\label{sec:discussion}

\subsection{Managerial implications}

The results suggest a change in how a centralized admissions platform organizes verification. Instead of treating verification as a separate screening stage that must be completed before matching, the platform can make checks event-triggered. At each school decision, it identifies the unresolved claims whose validity can move an applicant across the current cutoff. It checks one such claim, updates the priority data, and stops as soon as the choice is invariant. Claims far above or far below the cutoff are deferred because their resolution cannot affect that decision.

Verification must nevertheless be synchronized with allocation. The placed-only example shows that checking claims after computing a provisional final assignment is unsafe: a false claim by an unmatched applicant may already have redirected other applicants. An implementation should therefore call the pivotality test before a school releases rejections. Verified results should remain in the system, while deferred claims should be reconsidered whenever later applications change the school's candidate set. This architecture treats verification as a stateful component of the matching engine rather than as a final compliance review.

The audit-order result provides a simple queue when valid claims are expected. The platform first checks applicants claiming the highest priority group and uses the public baseline order within a group. On the truthful path, this order minimizes checks at each school decision among policies that query only currently pivotal claims. It is also easy to communicate: the system verifies the strongest potentially consequential claim first. The capacity bound offers a useful workload benchmark for each invocation of the school-choice step, although it does not bound the cumulative number of checks across all rounds.

The rule is implementable with standard priority data. At a fixed school decision, the platform needs the applicant's claimed group, baseline rank, capacity, and the outcomes of earlier checks. These features support a transparent decision log: for each check, the operator can record the current cutoff, why the claim was pivotal, its verification result, and the resulting school choice. Such a log can help administrators explain why some claims were checked immediately while others were deferred, and it gives participants a record against which the auditability of the platform's own conduct, in the sense of \citet{GrigoryanMoller2024theory}, can be assessed.

\subsection{Extensions}

The analysis separates exact implementation from the number of checks. SVDA recovers the full-verification choice under every audit order. The stronger ordering result is local: when outstanding claims are valid, reported-priority auditing minimizes checks among pivotal-only procedures for a fixed school and candidate set, and it uses at most $q_s$ checks. Total verification over the full DA path is more difficult because later applicants may make a previously irrelevant claim pivotal. When some claims are false, the best adaptive order may also depend on earlier verification results.

The baseline is local in an institutional as well as a mathematical sense. Eligibility is indexed by student--school pairs, and each school verifies the priority claims relevant to its own candidate set. This formulation accommodates school-specific requirements, such as a sibling relationship at a particular school, residence in a school-specific zone, or qualification for a particular program, and imposes no requirement that eligibility at one school reveal eligibility at another. In this dimension, it is more general than a common-eligibility model. Some systems, however, use the same underlying fact to determine priority at several schools. A central clearinghouse could then collect schools' eligibility rules or verification requests, verify the common fact once, and share the result across all affected schools. The appropriate extension would develop a global notion of pivotality that accounts for how one verification changes several school choices and the subsequent DA path. Such coordination could eliminate duplicate checks and reduce total verification below the sum of school-level requirements. Running the local procedure separately at each school provides a feasible benchmark. At a fixed school decision, receiving a shared result for free does not appear to increase the number of remaining checks, but we do not establish a path-level comparison. The present analysis does not characterize the globally minimal number of verifications.

The multiplicity of weakly dominant tier reports can also be addressed by restricting the type and report domains. Suppose that, for every relevant student--school pair, the student's true tier and submitted report must be strictly positive and that each school has at least three tiers: tier $0$ and at least two substantive positive tiers. Tier $0$ can then be reserved as a punishment tier rather than representing the true status or ordinary report of any applicant. Every genuine status is then a positive tier, whereas a report found to be false receives tier $0$ and ranks below every applicant whose report is valid. The automatic equivalence between truth-telling and adding a false positive claim, which arises when the student's true tier is already $0$, therefore disappears. A false report may still be irrelevant in a particular instance, but it is strictly harmful in some profile and hence is not itself weakly dominant. To see this, let $\theta_{is}>0$ and $\widetilde c_{is}\neq\theta_{is}$, and consider a one-seat school $s$ that $i$ ranks first, with a single rival $j$ of lower baseline priority and verified tier $\theta_{js}=\theta_{is}$. Truth-telling gives $i$ the seat by baseline priority, whereas the false report is demoted to tier $0$ and $j$ is admitted. This restriction gives tier $0$ a natural deterrence interpretation: an applicant caught misreporting is treated as lower priority than every applicant whose report remains unresolved or has been verified as truthful. Its cost is that the domain excludes applicants whose genuine status is tier $0$.

The priority model can also be enriched. A failed sibling claim, for example, may leave the student with residence priority rather than baseline priority. Deterministic fallback tiers can capture this case. Selective verification should still be useful, but the pivotality test must change because a claim may move the student between two positive tiers. The incentive result, by contrast, does not depend on the strictness of exact-tier verification: if a claim below the student's true tier were validated at the claimed tier, truthful reporting would still weakly raise her priority at every school, so Proposition~\ref{prop:tier-improvement} and Theorem~\ref{thm:strategyproof-augmented} would continue to hold. Another extension would allow priorities to differ across seats through reserves or more general substitutable choice functions. Historical Boston and many affirmative-action systems have this structure \citep{KominersSonmez2016,DurKominersPathakSonmez2018,SonmezYenmez2022}. It remains to ask when a local invariance test can support exact verification for these richer rules.

Finally, checks may differ in cost or accuracy. With unequal costs, reported priority need not minimize total expenditure even when every claim is valid. A fixed budget may make exact recovery impossible, requiring a tradeoff between verification costs, priority errors, instability, and welfare. Noisy checks and private documentation costs may also change reporting incentives. These extensions connect the model more directly to work on costly and imperfect verification \citep{BenPorathDekelLipman2014,MylovanovZapechelnyuk2017,BallKattwinkel2025}.

\subsection{Conclusion}

When priorities depend on verifiable applicant facts, a platform need not check every claim before running deferred acceptance. The relevant information is concentrated at the current school cutoff. A report needs checking exactly when its claimant can straddle that cutoff: for some resolution of the other reports, she is selected if her report is valid and rejected if it is invalid. The paper provides necessary and sufficient conditions for this cutoff crossing under ordered priority tiers.

The characterization also supports exact selective verification. Once no unresolved report is pivotal, the school already knows its full-verification choice. Applying this local invariance result at every DA history gives the paper's implementation guarantee: SVDA reproduces the full sequence of applications, tentative holdings, rejections, and the final matching. The timing is essential, because checking only final assignees can miss a false report that changed an earlier rejection chain.

A separate question is the order of verification. When outstanding reports are valid, checking higher claimed tiers first and using baseline priority within a tier minimizes the number of checks among procedures that verify only pivotal reports. At any fixed school decision, it also checks no more reports than the school has seats. These results provide a clear measure of the information needed for each local decision.

The computational results show that this distinction can be quantitatively important in stylized markets. In the baseline three-tier environment, reported-priority SVDA uses approximately one verification per student, 36.4 percent fewer than an exact on-path policy that checks every positive report when it first reaches a school. The calculation does not add an allocative benefit: its purpose is to measure the administrative savings obtained while preserving the full-verification path.

The incentive result makes the conditional ordering result operational. Truthful reporting of preferences and eligibility tiers is a weakly dominant strategy. Along this recommended path, every positive claim is valid, which is exactly the condition under which reported-priority auditing is locally verification-minimal. The broader design lesson is that incentives and information acquisition are complements: truthful claims determine the relevant priority order, while pivotality identifies which parts of that order the platform must actually verify.
\section*{Code and Data Availability}
All computational results use synthetic markets generated by the accompanying Python replication package. The package reproduces Table~\ref{tab:simulation-baseline} and Figures~\ref{fig:simulation-comparative}--\ref{fig:simulation-placed-only}. It also contains the replication-level output, a test suite that compares the count-based pivotality test with direct enumeration of completions, and a README describing the software versions and market-generation procedure. No human-subject or proprietary data are used.

\end{document}